\def\centerhack#1{\hbox to 0pt{\hss\footnotesize #1\hss}}
\def\@listi{\leftmargin\leftmargini
	\parsep 1\p@ \@plus0\p@ \@minus\p@
	\topsep 2\p@   \@plus0\p@ \@minus\p@
			\itemsep1\p@ \@plus0\p@ \@minus\p@}
\let\@listI\@listi\@listi
\def\centerhack#1{\hbox to 0pt{\hss\footnotesize #1\hss}}
\def\dchack#1{\vbox to 0pt{\vss{\hbox to 0pt{\hss#1\hss}}\vss}}
\renewcommand\paragraph[1]{\smallskip\noindent\textbf{#1.}}
\newcommand{\StatexIndent}[1][3]{	\setlength\@tempdima{\algorithmicindent}	\Statex\hskip\dimexpr#1\@tempdima\relax}
\newcommand{\eg}{e.g.,\xspace}
\newcommand{\FS}{\ensuremath{\mathit{FS}}\xspace}
\newcommand{\concat}{\ensuremath{\mathbin{\|}}}
\newcommand{\name}{PHI\xspace}
\newcommand{\hornet}{HORNET\xspace}
\newcommand{\lap}{LAP\xspace}
\newcommand{\dovetail}{Dovetail\xspace}
\newcommand{\sphinx}{Sphinx\xspace}
\newcommand{\tor}{Tor\xspace}
\renewcommand{\hornet}{HORNET\xspace}
\newcommand{\chen}[1]{\ding{110}\ding{43}\textcolor{ForestGreen} {CC: {#1}}}
\newcommand{\david}[1]{\ding{43}\textcolor{magenta} {DB: {#1}}}
\newcommand{\george}[1]{\ding{43}\textcolor{blue} {GD: {#1}}}
\newcommand{\carmelasubscript}{\textsubscript{\textcolor{BurntOrange}{\textsf{
				\textbf{CT}}}}}
\newcommand{\carmela}[1]{\textcolor{BurntOrange}{\ding{110}\carmelasubscript~{
			#1}}}
\newcommand{\danielecolor}{Cerulean}
\newcommand{\danielesubscript}{\textsubscript{\textcolor{\danielecolor}{\textsf{\textbf{DA}}}}}
\newcommand{\daniele}[1]{\textcolor{\danielecolor}{\ding{110}\danielesubscript~{#1}}}
\newcommand{\dadelnoname}[1]{\bgroup\markoverwith{\textcolor{\danielecolor}{\rule[0.35ex]{2pt}{1pt}}}\ULon{#1}}
\newcommand{\dadel}[1]{\dadelnoname{#1}\kern0.1em\danielesubscript}
\newcommand{\dasugg}[1]{\textcolor{\danielecolor}{[#1]\danielesubscript}}
\newcommand{\dasubs}[2]{\dadelnoname{#1}\dasugg{#2}}
\renewcommand{\chen}[1]{}
\renewcommand{\david}[1]{}
\renewcommand{\george}[1]{}
\renewcommand{\carmela}[1]{}
\renewcommand{\daniele}[1]{}
\begin{document}
	
\newtheorem{theorem}{Theorem}[section]
\newtheorem{corollary}{Corollary}[theorem]
\newtheorem{definition}[theorem]{Definition}
\newtheorem{lemma}[theorem]{Lemma}	

\renewcommand{\name}{TARANET\xspace}
\newcommand{\nameacr}{Traffic-Analysis Resistant Anonymity \\at the NETwork layer}
\title{\name: \nameacr}
\author{\IEEEauthorblockN{Chen Chen}
	\IEEEauthorblockA{
		chenche1@andrew.cmu.edu\\ 
		Carnegie Mellon University
	} \\
	\IEEEauthorblockN{David Barrera}
	\IEEEauthorblockA{
		david.barrera@polymtl.ca \\
		Polytechnique Montreal
	}
	\and
	\IEEEauthorblockN{Daniele E. Asoni}
	\IEEEauthorblockA{
		daniele.asoni@inf.ethz.ch \\
		ETH Z\"urich
	} \\
	\IEEEauthorblockN{George Danezis}
	\IEEEauthorblockA{
	g.danezis@ucl.ac.uk \\
	University College London
	}
	\and
	\IEEEauthorblockN{Adrian Perrig}
	\IEEEauthorblockA{
		adrian.perrig@inf.ethz.ch \\
		ETH Z\"urich
	} \\
   \IEEEauthorblockN{Carmela Troncoso}
   \IEEEauthorblockA{
   	carmela.troncoso@epfl.ch \\
   EPFL
   }
}

\maketitle

\begin{abstract}
	Modern low-latency anonymity systems, no matter whether constructed as an 
overlay or implemented at the network layer, offer limited security guarantees
against traffic analysis. On the other hand, high-latency anonymity systems offer
strong security guarantees at the cost of computational overhead and long delays,
which are excessive for interactive applications. We propose \name, an anonymity system that implements protection
against traffic analysis at the network layer, and limits the incurred latency and overhead. In \name's
setup phase, traffic analysis is thwarted by mixing. In the data transmission phase,
end hosts and ASes coordinate to shape traffic into constant-rate transmission 
using packet splitting. Our prototype implementation shows that \name can forward anonymous traffic 
at over 50~Gbps using commodity hardware.

 \end{abstract}

\section{Introduction}
\label{sec:introduction}

Users are increasingly aware of their lack of privacy and are turning to anonymity
systems to protect their communications. \tor~\cite{dms04} 
is currently the most
popular anonymity system, with over 2 million daily users~\cite{torusers}.
Unfortunately, \tor offers neither satisfactory performance nor strong
anonymity. With respect to performance, \tor is implemented as an overlay
network and uses a per-hop reliable transport, increasing both propagation and
queuing latency~\cite{dingledine2009performance}. With respect to anonymity
guarantees, \tor is vulnerable to traffic analysis~\cite{murdoch2005low,overlier2006locating,murdoch2006hot,syverson2001towards}. 

Users also have the option of anonymity systems with stronger guarantees such as
DC-nets~\cite{chaum1988dining,golle2004dining,wolinsky2012dissent}, Mix
networks~\cite{chaum1981untraceable,berthold2001web}, and peer-to-peer
protocols~\cite{sherwood2002p, freedman2002tarzan}. However, these systems
either scale poorly or incur prohibitive latency and reliability, making them
unsuitable for many practical applications. 

In an effort to improve the performance of anonymity networks, research has
built on the idea of network-layer anonymity (e.g., \lap~\cite{Hsiao2012},
\dovetail~\cite{Sankey2014}, and \hornet~\cite{Chen2015Hornet}). Network-layer
anonymity systems assume that the network infrastructure (e.g., routers)
participates in establishing anonymous communication channels and assists in
forwarding anonymous traffic. Intermediate anonymity supporting network nodes
(or nodes for short) first cooperate with senders to establish anonymous
sessions or circuits, and then process and forward traffic from those senders to
receivers. While these systems achieve high throughput and low latency, the
security guarantees of these systems are no stronger than \tor{}'s. Moreover,
\lap and \dovetail leak the position of intermediate nodes on the path and the
total path length, which reduces the anonymity set size, facilitating
de-anonymization~\cite{Chen2015Hornet}.

The problem space appears to have an unavoidable tradeoff: \emph{strong anonymity appears achievable only through drastically higher overhead}~\cite{das2017anonymity}. In this paper, we aim to push the boundaries of this anonymity/performance tradeoff by combining the speed of network-layer anonymity systems with strong defenses. 

To improve the anonymity guarantees, traffic analysis attacks need to be
prevented, or made significantly harder/costlier to perform.
The common method to achieve this is to insert chaff, which are dummy packets
which to an adversary look indistinguishable from encrypted data packets.
By mixing chaff with data packets, one can add noise to the underlying
traffic patterns to defeat traffic analysis. For example, one can insert
chaff to maintain a constant transmission rate on an adversarial network
link, so that the traffic patterns observed by the observing adversary
stay unchanged and leak no identifying information.

However, both existing methods of applying chaff traffic, i.e.,
constant-transmission-rate link padding~\cite{wang2008dependent,
freedman2002tarzan,aqua,herd2015} and probabilistic end-to-end
padding~\cite{levine2004timing,piotrowska2017loopix}, are unsatisfactory. On one
hand, constant-transmission-rate link padding uses chaff to shape traffic
between adjacent pairs of nodes making it perfectly homogeneous, thus provably
concealing the underlying traffic patterns from a network adversary. However, a
compromised node is able to distinguish chaff traffic from real traffic, giving
link padding no anonymity guarantees when compromised nodes are present. On the
other hand, probabilistic end-to-end padding enables end hosts to generate chaff
traffic that is indistinguishable from real traffic, but existing
schemes~\cite{levine2004timing,piotrowska2017loopix} fail to fully conceal the
end-to-end transmission rate and can be defeated by packet-density
attack~\cite{shmatikov2006timing}. 

We take the best of both worlds and propose a new method of applying chaff traffic that
has so far not been explored:
an \emph{end-to-end} padding scheme that shapes a flow's traffic pattern into \emph{constant-rate transmission on all traversed links}. 
At a flow's origin, the sender divides its traffic
into small \emph{flowlets} that transmit packets at a globally-fixed constant
rate. Each forwarding node modulates the outgoing transmission rate of each
flowlet so that the transmission rate remains constant over time and also
remains constant across all links traversed by the flowlet. This approach prevents traffic
patterns from propagating across nodes. We
call this technique \emph{end-to-end traffic shaping}.

However, end-to-end traffic shaping is surprisingly tricky to achieve in the
presence of natural packet loss, adversarial packet drops, or packet propagation
delays. The main challenge for coordinated traffic shaping is how to maintain
constant-rate transmission across all traversing links when a forwarding node's incoming
transmission rate is lower than the outgoing transmission rate. A simple
approach that enables a forwarding nodes to create valid packets to send toward
the destination appears promising, but unfortunately, this approach could be
abused, as packet injection requires the cryptographic keys that the sender
shares with downstream nodes. Moreover, such an approach would enable two
malicious nodes that are on the same flowlet path to trivially link observed
packets of the same flowlet. Similarly, allowing a node to replay existing
packets cannot be permitted, as replicated packets themselves would constitute a
trivially detectable pattern.

An initial idea is to enable each node to have a spare packet queue, containing
packets that can be sent to make up for the difference between the incoming
transmission rate and the required outgoing transmission rate. But this poses a
conundrum: how can we fill up the spare packet buffer if the flowlet rate
remains constant in the first place? Our solution is \emph{packet splitting}, a
cryptographic mechanism which allows an end host to generate a packet that
splits into two different valid packets of the same size as the original packet
at a specific node. Through splittable packets, an end host can fill up the
spare packet queue at forwarding nodes, which in turn enables constant-rate
transmission even in case of lost or delayed incoming packets.

In this paper, we propose \name, a scalable, high-speed, and
traffic-analysis-resistant anonymous communication protocol, which uses the end-to-end 
traffic shaping assisted by packet splitting as one of its novel mechanisms.
\name is directly built into the network infrastructure to achieve short paths and high throughput. 
It uses mixing for its setup phase and end-to-end traffic shaping for
its data transmission phase to resist traffic analysis. Our paper makes the following contributions:

\begin{enumerate}[noitemsep,nolistsep]

\item We propose an efficient end-to-end traffic shaping technique that maintains
per-flow constant-rate transmission on all links and defeats traffic analysis attacks. 
We also propose in-network packet splitting as the enabling mechanism 
for the end-to-end traffic shaping technique.

\item We present an onion routing protocol that enables payload integrity protection, replay detection, and splittable packets, which are essential building blocks for end-to-end traffic shaping.

\item We design, implement, and evaluate the security and performance of \name.
Our prototype running on commodity hardware can forward over 50~Gbps of anonymous traffic,
showing the feasibility to deploy \name on high-speed links.

\end{enumerate}

\section{Background and Related Work}
\label{sec:background}
This section presents background on network-layer anonymity protocols. We also discuss adversarial traffic analysis techniques to de-anonymize end points, focusing on those that current network-layer anonymity protocols fail to deter.

\subsection{Network-layer Anonymity Protocols} \label{sec:networklayerprotocols}

Recent research~\cite{Hsiao2012,Sankey2014,Chen2015Hornet} proposes
\emph{network-layer anonymity systems} that incorporate anonymous communication as a service of network infrastructures in the Internet and next generation network architectures~\cite{Yang2007NIRA,godfrey2009pathlet,Xin2011SCION}.
The basic assumption of a network-layer anonymity system is that Autonomous Systems (AS) can conduct 
efficient cryptographic operations when forwarding packets to conceal forwarding
information. Additionally, a network-layer anonymity system uses direct
forwarding paths rather than reroute packets through overlay networks as in
Tor~\cite{dms04}. This processing would be done on (software) routers, for
instance, but more abstractedly the term \emph{node} is used to refer to
the device or set of devices dedicated to the anonymity system within an AS.

A network-layer anonymity system anonymizes its traffic by relying on ASes to collaboratively hide the forwarding paths between senders and receivers. 
We remark that a network-layer anonymity system can offer neither \emph{sender anonymity} nor \emph{recipient anonymity} as defined by  Pfizmann and K\"{o}hntopp~\cite{pfitzmann2001}. A compromised first-hop AS on the path can observe the sender of a message, violating sender anonymity. Similarly, a compromised last-hop AS can identify the receiver, which breaks recipient anonymity. Instead, a network-layer anonymity system offers 
\emph{relationship anonymity}~\cite{pfitzmann2001} that prevents linking two end hosts of a message.

\chen{David mentions that we could delete the following paragraph for additional space}
Besides anonymity, the basic design goals for a network-layer anonymity system are scalability and performance. With respect to scalability, a network-layer anonymity system minimizes the amount of state kept on network routers who possess limited high-speed memory. 
With respect to performance, a network-layer anonymity system should offer low latency and high throughput.

\textbf{\hornet~\cite{Chen2015Hornet}} improves on the security guarantees for
network-layer protocols by using full onion encryption to guarantee bitwise
unlinkability.
\hornet introduces several useful primitives for
stateless onion routing, which we extend in \name.

\hornet is circuit-based like overlay systems, but
it operates at the network layer.
As with LAP and Dovetail, processing data packets at intermediate nodes requires only
symmetric cryptography. This design comes at the expense of a relatively slow
round-trip time for setup packets which requires nodes on the path to perform public-key cryptography at the
start of each session. During setup, the sender establishes keys between itself
and every node on the path. The sender embeds these
keys along with routing information for each hop into the header of each
subsequent data packet. Since the state is carried within packets, intermediate
nodes do not have to keep per-flow state, which enables
high scalability.

Through bit-pattern unlinkability in its traffic and confidentiality of the packet's path information,
\hornet can defend against passive adversaries matching packets based on packet contents.
Nevertheless, the protocol is vulnerable to
more sophisticated active attacks.
\hornet headers are re-used for all data packets in a session, and
payloads are not integrity-protected.
Thus, \hornet cannot protect against packet replays since an
adversary could change a payload arbitrarily, making the packet look
indistinguishable from a legitimate new packet to the processing node.
Such a replay attack can be used in conjunction with traffic analysis to insert
recognizable fingerprints into flows, which can help de-anonymize communicating
endpoints.

\paragraph{Lightweight anonymity systems}
The first class of network-layer anonymity protocols proposed is
the so-called \emph{lightweight} system, which consists of two
proposals, \lap~\cite{Hsiao2012} and \dovetail~\cite{Sankey2014}.
These systems defend against topological attacks by encrypting forwarding information in
packet headers. However, in both schemes, packets stay unchanged from hop to
hop, thus enabling bit-pattern correlation of packets at distinct compromised
nodes.

\subsection{Traffic Analysis Attacks}
\label{sec:ta_techniques}

Traffic analysis aims to identify communicating endpoints based on
\emph{metadata} such as volume, traffic patterns, and timing. The literature
broadly classifies traffic analysis techniques into passive and active,
depending on whether the adversary manipulates traffic.

\subsubsection{Passive Attacks} \label{sec:ta_passive}

\paragraph{Flow dynamics matching}
An adversary eavesdropping on traffic at two observation points (including an
adversary observing the ingress and egress traffic of a single node) can try to
detect whether (some of) the packets seen at the observation points belong to
the same flow by searching for similarities among the dynamics of
all observed flows~\cite{Zhu2005,timing-fc2004,murdoch2005low,throughput2011}.
For example, the adversary can monitor packet inter-arrival times, flow
volume~\cite{blum2004detection}, or on/off flow
patterns~\cite{wang2003,zhang2000detecting}.

\paragraph{Template attacks}
An adversary can construct a database of traffic patterns (\emph{templates})
obtained by accessing known websites or other web-services through the anonymous
communication system. When eavesdropping on the traffic of a client, the
adversary compares the observed flows with the patterns stored in the database,
and if a match is found the adversary is able to guess the website or
web-service accessed by the client with high
probability~\cite{Gong2011,juarez2014,Wang2014}.

\paragraph{Network statistics correlation}
Another possible attack consists in monitoring network characteristics of
different parts of the network, and comparing them to the characteristics of
targeted anonymized flows. For instance, by comparing the round-trip time (RTT)
of a target bidirectional flow with the RTTs measured to a large set of
network locations, an adversary can identify the probable network location of
an end host in case the RTT of the flow showed strong correlation with the RTT
to one of the monitored network
locations~\cite{DBLP:journals/tissec/HopperVC10}. Similarly, by simply the
throughput (over time) of a unidirectional flow and comparing it with the
throughput to various network location, the adversary can guess the end host's
location~\cite{mittal2011stealthy}.\chen{I do not see why we group
the latency fingerprint attack and the throughput fingerprint attacks into 
one "network state correlation" attack. What about network-statistics fingerprinting attacks?}

\subsubsection{Active Attacks}
\label{sec:activeattacks}

Active traffic analysis uses similar techniques as passive traffic analysis, but
it additionally involves traffic manipulation by the adversary, in particular
packet delaying and dropping, to introduce specific patterns.
Chakravarty et al.~\cite{chakravarty2014} show that active
analysis can have high success rates even when working with aggregate Netflow
data instead of raw packet traces.

\paragraph{Flow dynamics modification}
By modifying the flow dynamics (inter-packet timings), the adversary can add a
\emph{watermark} (or \emph{tag}) to the flow, which the adversary is then able
to detect when observing the flow at another point in the
network~\cite{wang2003,houmansadr_rainbow:_2009,houmansadr_swirl:_2011}. This
attack is known as \emph{flow watermarking}. A similiar attack, called \emph{flow
fingerprinting}, enables an adversary to encode more
information into the flow dynamics, which can later be extracted from the same
flow seen at another point in the network~\cite{houmansadr2013}. For both
attacks, depending on the coding technique, flows may require more or fewer
packets for the watermark/fingerprint to be reliably identified within the
network.

\paragraph{Clogging Attacks}
Flow dynamics modification requires that the adversary
control multiple observation points in the network. Clogging attacks are
similar, but the adversary only needs to be able to observe the target flow at a
single network location. For these attacks, the adversary causes network
congestion~\cite{murdoch2005low, evans2009practical}, or
fluctuation~\cite{chakravarty2010traffic} at other nodes in the network, and
then observes whether these actions affect the observed target flow. If so, it
is likely that the target flow traverses the nodes at which
congestion/fluctuation has been caused.

\subsection{Chaff-based Defenses}
Adding chaff traffic (also referred to as padding traffic or dummy traffic) is a defense mechanism that thwarts traffic analysis 
by concealing real traffic patterns. An important family of chaff-based 
anonymity protocols uses \emph{link padding}~\cite{shmatikov2006timing, wang2008dependent, freedman2002tarzan,aqua,herd2015}.
Link padding, used together with link encryption, allows neighboring forwarding nodes to add chaff to shape
the patterns of all traffic on a network link into either constant-rate 
transmission~\cite{shmatikov2006timing, wang2008dependent} or a predetermined
packet schedule~\cite{freedman2002tarzan,aqua,herd2015}. However, because in link padding a node is able to
distinguish chaff packets from real packets, attackers that compromise nodes are still capable of 
identifying the underlying traffic patterns and conduct traffic analysis.

Another class of chaff-based protocols uses \emph{end-to-end padding}~\cite{levine2004timing}.
In the end-to-end padding scheme, end hosts craft chaff packets that traverse the network
together with real packets, and the added chaff packets carry flags to inform the forwarding nodes
about when to drop the chaff packets. Thus, an end host's traffic demonstrates different
patterns as the traffic traverse the network. Compared to link padding, in end-to-end padding
a compromised node cannot distinguish chaff traffic from real traffic, and
is thus unable to discover the real traffic patterns. Nevertheless, the existing work,
defensive dropping~\cite{levine2004timing}, fails to fully conceal the timing information
of the real traffic, and is trivially defeated by measuring packet density~\cite{shmatikov2006timing}.

\section{Problem Definition}
\label{sec:problemdefinition}

We consider a scenario where an adversary secretly conducts a 
network mass-surveillance program. By stealthily tapping into inter-continental fiber links, or
by controlling
a set of domestic ISPs/IXPs, the adversary gains bulk access to network traffic. 
Besides matching identifiers to filter packets, the adversary is also capable of 
conducting traffic manipulation and traffic pattern matching. 
A pair of anonymity-conscious users would like to communicate through the network,
hiding the fact that they are communicating from the adversary. The communication between the pair of users is bi-directional. Without
loss of generality, we call the user that initiates the anonymous communication \emph{sender},
and the other user \emph{receiver}.

\subsection{Network Assumptions}
The underlying network is divided into ASes, or simply \emph{nodes}. Each node forwards
packets according to a routing segment. Each routing 
segment contains forwarding information for a node between the sender and the receiver.
For a sender to reach a receiver,
the sender can obtain a sequence of \emph{routing segments}, named \emph{path}. 

Except the ingress and egress links that are needed as forwarding information
through an AS, routing segments should leak no extra information about the end
hosts or the path before or after the forwarding node. This property is satisfied by
several next-generation Internet architectures that use source-controlled
routing (e.g., SCION~\cite{Xin2011SCION}, NIRA~\cite{Yang2007NIRA}, or
Pathlet~\cite{godfrey2009pathlet}), or in the Internet through IPv6 Segment
Routing~\cite{segment_routing}.

\subsection{Threat Model}

We consider a global active
adversary, that is capable of
controlling all links between any pair of ASes, or between an AS and an end host.
This means that the adversary has bulk access to contents and timing information
of packets on all links and can also inject, drop, delay, replay, and modify
packets. We additionally assume that the adversary is able to
compromise a fraction of ASes. By compromising an AS, the adversary learns all
keys and settings, has access to all traffic that traverses the compromised AS,
and is able to control the AS including delaying, redirecting, and dropping traffic,
as well as fabricating, replaying, and modifying packets. We only guarantee
relationship anonymity for end hosts if there exists at least one uncompromised
AS on the path between sender and receiver. We remark that the adversary
under this assumption is able to perform all traffic analysis attacks in Section~\ref{sec:ta_techniques}.

\subsection{\name Goals} \label{sec:goals}

\paragraph{Anonymity} \name aims to provide relationship anonymity 
(defined by Pfizmann and K\"{o}hntopp~\cite{pfitzmann2001}) when
a sender and a receiver share mutual trust. 
We refer to the relationship anonymity under this condition as
\emph{third-party relationship anonymity}. While requiring trust in receivers
limits our protocol's application scope,
third-party anonymity is actually sufficient when communicating parties are
authenticated end-to-end (e.g., VoIP), when avoiding censorship where the
receiver (e.g., a foreign news site)
is known not to cooperate with the censoring entity, when a warrant canary (e.g.,
\url{www.rsync.net/resources/notices/canary.txt}) has been recently updated for that endpoint, or when 
the receiver is a trusted node acting as a proxy.

\paragraph{High throughput and low latency}
The processing overhead should be small, i.e., it should only require symmetric
cryptographic operations and access to a small amount of easy-to-manage per-flow
state. Consequently, an efficient implementation (running at line speed) on a
network device should be possible with a small amount of extra hardware.

\paragraph{Scalability} Nodes should be capable of handling the large
volume of simultaneous connections as observed on Internet core routers.
\name aims to minimize the amount of per-flow state maintained. 
Specifically, \name guarantees that the amount of state on a router is bounded given a fixed throughput.
Moreover, adding new nodes to the network should additionally not require coordination with all other nodes.

\section{Protocol Design}
\label{sec:overview}

\paragraph{Communication Model}
Hosts communicate anonymously through \name{}-enabled
Autonomous Systems (ASes) using \emph{flowlets}. A \name flowlet allows an end host to
send traffic anonymously at a constant rate $B$ for a fixed time period $T$. All 
anonymous traffic is divided into a set of flowlets by end hosts to leverage \name's service.
Figure~\ref{fig:overview} graphs the lifecycle of a \name flowlet. 

A flowlet's life-cycle begins with a \emph{setup phase} followed by a \emph{data transmission phase}. At the beginning of the setup phase, a sender first anonymously retrieves two paths: a forward path from the sender to the receiver and a backward path from the receiver back to the sender. A path contains the routing segments, the public keys, and the certificates of all nodes between the two end hosts. One mechanism for anonymously retrieving
paths is to have end hosts query global topology servers through \name
flowlets that are established using network configuration information (e.g.,
distributed to end hosts through a DHCP-like
infrastructure~\cite{Chen2015Hornet}). Another mechanism is to disseminate paths
and public keys throughout the network to end hosts, as done in certain future
network architectures (e.g., NIRA~\cite{Yang2007NIRA}, Pathlets~\cite{godfrey2009pathlet}).
A third mechanism could be based on private
information retrieval (PIR)~\cite{Chor1998},
which allows to trade off a lower communication
overhead for an increased computation overhead on the servers providing the network information
and the keys.

Once the sender successfully obtains both paths, the sender and the receiver exchange two setup messages traversing the obtained paths. By processing a setup message, each on-path node establishes a shared symmetric key with the sender. The per-node shared key is later used to conceal routing information by layered encryption/decryption in the data transmission phase. To prevent storing per-flow cryptographic state on each node, a node encrypts the shared key using a local secret key that the node never reveals. The resulting encrypted shared key, which we call the \emph{Forwarding Segment} (FS),  is carried by all data packets and allows the node to dynamically retrieve its shared symmetric key.

\begin{figure}[tbp]
	\centering
	\includegraphics[width=0.45\textwidth]{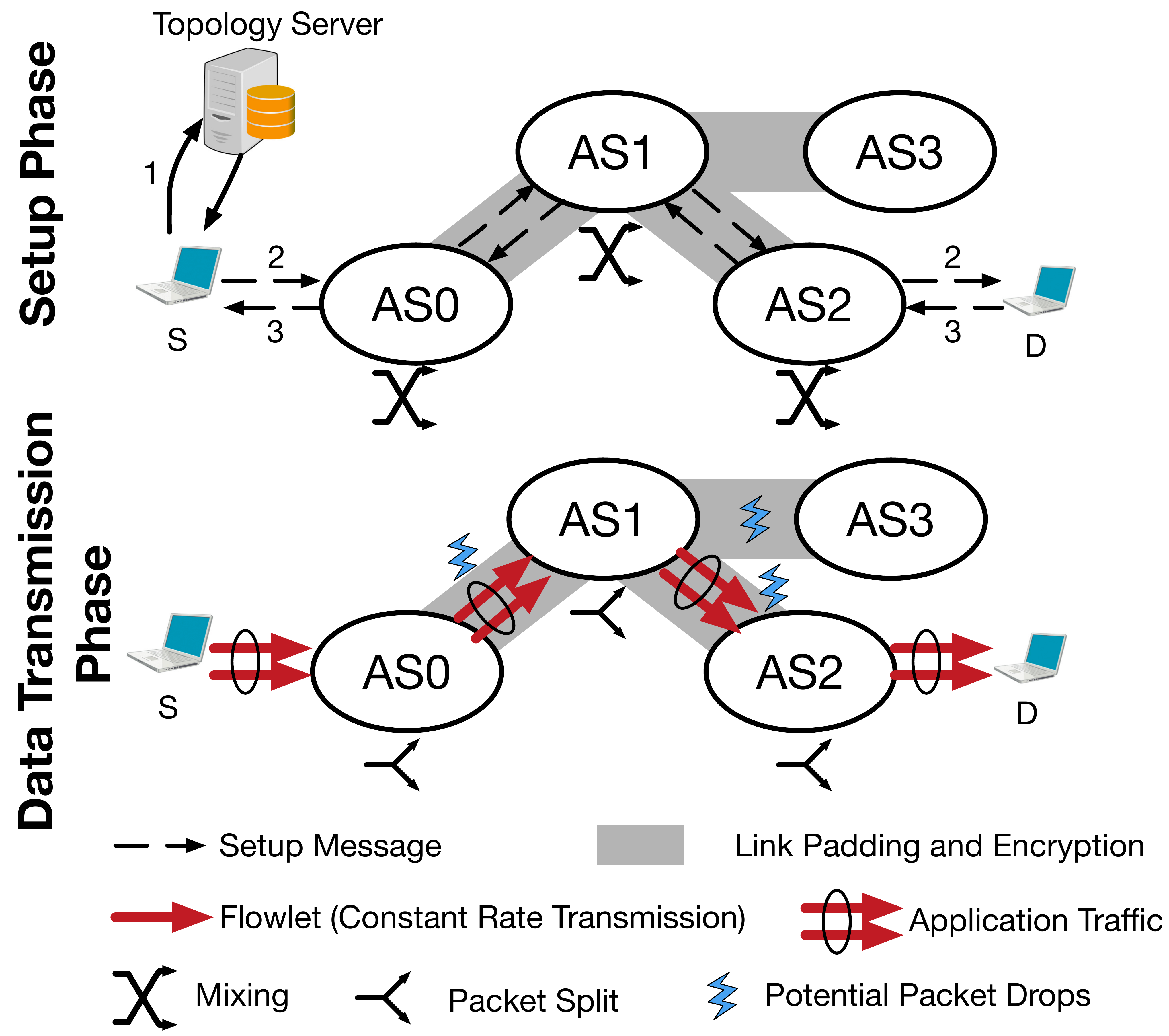}
	\caption{\name design overview.}
	\label{fig:overview}
\end{figure}

With routing segments, FSes, and per-node symmetric keys, the sender is able to create \name data packets that can reach the receiver. 
An on-path node can process a data packet with only symmetric cryptographic operations, enabling highly efficient packet forwarding.
Within the first batch of packets along the forward path, the sender transmits all routing segments, FSes, and shared symmetric keys for the 
backward path, so that the receiver can send packets back to the sender.

\paragraph{Traffic analysis resistance}
\name resists traffic analysis attacks by combining
an onion routing protocol (an enhanced adaptation of the one in \hornet), 
a newly proposed end-to-end traffic shaping scheme, and mixing. 
First, compared to \hornet which provisions confidentiality, authenticity, and bit pattern unlinkability,
\name additionally offers payload integrity protection, replay protection, and packet splitting,
which is a vital enabling technique for the end-to-end traffic shaping scheme (Section~\ref{sec:onion_routing}). 
Second, for the data transmission phase, \name enables end-to-end traffic normalization for flowlet traffic.
For each flowlet, the sender and receiver maintain a constant transmission rate
shared by every end host. Each forwarding node
maintains the same constant transmission rate for outgoing packets belonging to the flowlet (Section~\ref{sec:flowlet}).
Third, for messages in the setup phase, 
\name requires each node to conduct mixing~\cite{chaum1981untraceable} in order to prevent linking messages based
on their timing and order (Section~\ref{sec:mixing}). Finally, to hide the difference between setup packets and data packets and to defeat a global eavesdropper that monitors the number of flowlets on links between nodes, \name additionally requires neighboring nodes to perform link encryption and link padding (Section~\ref{sec:link_encryption}).

The rationale for adopting different techniques for the setup phase and the data transmission phase 
is due to our observation of the different performance requirements in these two phases. 
Regarding the setup phase, assuming a large number of simultaneous connection 
setups, batching setup messages on a node will result in a small delay for the setup phase.
Moreover, because changing the order of messages received by a node has no impact
on the performance of the setup phase, we can randomize the order of messages within
each batch. Finally, since processing a chaff setup message requires public-key cryptographic
operations, creating chaff setup messages would result in a large computational overhead.

For the data transmission phase, on the other hand, because packet order is important
for TCP performance, randomizing the message order severely impacts 
application performance. Additionally, because data
packet processing is highly efficient, we can actively conduct traffic shaping on
both end hosts and intermediate nodes by using chaff packets
(Section~\ref{sec:flowlet}).

\subsection{\name Onion Routing Protocol}
\label{sec:onion_routing}
Like the \hornet onion routing
protocol~\cite{Chen2015Hornet}, the \name protocol offers bit-pattern
unlinkability, payload confidentiality, and per-hop authenticity. Bit-pattern
unlinkability eliminates any identifiers that facilitate packet matching.
Payload confidentiality prevents leaking upper-layer sensitive information.
Finally, each \name header contains per-hop MACs that protect the integrity of
both the header and the payload, unlike \hornet, whose per-hop integrity
guarantees only cover the header. Therefore, in \name, tampered or
forged packets will be detected by benign nodes on the path and dropped
immediately.

\name also adopts the scalable
design of \hornet, i.e., using packet-carried forwarding state. Storing
per-flowlet state at core routers requires a large amount of high-speed memory,
precluding scalability. Thus, in line with state-of-the-art
network-layer anonymity protocols~\cite{Hsiao2012, Sankey2014, Chen2015Hornet},
\name embeds all necessary forwarding state (e.g., onion decryption keys,
next-hop information, control flags) in packet headers instead of storing the state
on routers.

\begin{table}[!!tbh]
	\centering
	\scriptsize
	\begin{tabular}{c|cccccc}
		\hline
		\thead{Protocol} & \makecell{Bit-pattern \\ unlinkability} & Scalability & \makecell{Payload\\ Integrity}  & \makecell{Replay \\Protection} & \makecell{Packet \\ Splitting} \\
		\hline
		\hornet & Yes & Yes & No & No & No \\
		\name & Yes & Yes & Yes & Yes & Yes \\\hline
	\end{tabular}
	\caption{Comparison between \name and \hornet onion routing protocols}
\end{table}

We highlight three new features that \name introduces for the data transmission
phase compared to \hornet. First, integrity protection
is extended to data packets' payloads, eliminating tagging attacks targeting at
manipulating data payloads to create recognizable patterns.
Second, data packets within the same flowlet have unique identifiers bound to
the packets themselves, enabling replay protection.
Third, \name allows an end host to create special chaff packets, each of which splits into two
packets at a specific node. To all other nodes, the
original packet and the resulting packets are indistinguishable from
ordinary data packets in the same flowlet.
Split packets traverse the same path as other packets in the flowlet and their per-hop MACs need to be correct at each downstream node. Splitting a chaff
packet into multiple packets plays a vital role in the end-to-end traffic
shaping technique described in Section~\ref{sec:flowlet}. We defer the detailed
description of the technical aspects of packet splitting to
Section~\ref{sec:specification}.

\paragraph{Replay protection}
In \name, each \name packet header is uniquely
identifiable, enabling intermediate nodes to detect replay attacks by checking
the header's freshness. Specifically, an intermediate node can retrieve 3 fields
from each packet: (1) a shared secret with the sender, (2) a per-packet Initial
Vector (IV), and (3) a per-packet expiration time. The first two fields together
uniquely identify a packet and are used as input to membership queries and
for the insertions to the replay detector. The third field is used to check and drop
expired packets.

\name nodes detect replayed packets by maintaining a rotating Bloom
filters composed of 3 subject Bloom filters, as described by Lee et al.~\cite{lee2017replay}.
A packet received at $t=[i \cdot \frac{TTL}{2},\, (i+1) \cdot \frac{TTL}{2}]$ is checked
against all 3 filters and is only inserted into $i'$-th Bloom filter, where $i' \cong i \mod 3$. 
The $i$-th subject filter is cleared at time $(3N + i) \cdot \frac{TTL}{2}$ (N is an integer).
The rotating Bloom filter guarantees that each packet inserted has a lifetime between $TTL$ and $\frac{3}{2}TTL$,
where $TTL$ is the maximum lifetime of a packet. 
To reduce cache misses and increase performance, we also use \emph{blocked Bloom
filters}~\cite{Putze2007} instead of standard Bloom filters.

Replay detection state is not per-flow state, since the size of the detector
grows linearly with its node's bandwidth, and not with the number of flowlets
traversing that node. The size of our detector is
\texttildelow15~MB\footnote{Computed using the CAIDA dataset described in
Section~\ref{sec:evaluation}.} for a 10~Gbps link when the
false positive rate is at most $10^{-6}$ and $TTL=6~\text{s}$ (the maximum
packet lifetime we consider in Section~\ref{sec:exptime}). Each false positive
result causes the corresponding packet to be dropped. Given that the packet drop rate
of the Internet is around 0.2\%~\cite{Sundaresan2011}, we could reduce the 
detector's size by allowing higher false positive rate.

\subsection{End-to-end Traffic Shaping}
\label{sec:flowlet}

\paragraph{Flowlet}
Our basic idea for defending data transmission against traffic analysis is to shape
traffic from heterogeneous applications into constant-rate transmission. 
A flowlet is the basic unit
through which an end host is able to transmit
packets at a constant throughput $B$ and for a maximum lifetime $T$. During the
lifetime $T$ of a flowlet, the end host always transfers packets at rate
$B$, inserting chaff packets if necessary. More
generally, if an end host needs to transfer data at rate $B'$ for time $T'$, 
it initiates a sequence of $\lceil T'/T \rceil$ flowlet batches, each of which contains $\lceil
B'/B \rceil$ simultaneous flowlets.

An end host shuts down a flowlet before the flowlet expires
when there is no more data to send. When shutting down multiple simultaneous
flowlets, an end host pads each flowlet with a random number of packets to
prevent linking the flowlets by their expiration times.
A node erases local state and terminates a flowlet
when there are no more packets in its outgoing packet queue.

The key property of a flowlet is to maintain constant transmission rates not only
at end hosts but also on all traversed links, for which the flowlet 
relies on end-to-end padding instead of link padding.
In link padding, a pair of neighboring intermediate nodes coordinate
to inject chaff to maintain a constant sending rate on a link.
While link padding is effective against a network adversary,
it is insufficient in the case of compromised nodes, since
they can distinguish chaff inserted by neighbors from actual data packets. 
To defend against compromised nodes, we need chaff packets that are indistinguishable from data packets. Because \name uses onion encryption as a basic building block, 
one can create such indistinguishable chaff only when possessing shared keys with all traversing nodes. Thus, only sending end hosts are able to create such chaff.

\paragraph{Necessity of packet splitting}
To achieve constant-rate transmission, every flowlet should ideally arrive and leave with rate $B$ at every node.
However, drops/jitter may cause the incoming rate to vary: a higher rate is
absorbed by the queues, but a lower rate requires that the node be able to
produce ``extra packets'', which need to resemble legitimate packets to any
downstream node. This implies that these packets must also be generated by the
sender like end-to-end chaff.
 But since the sender cannot send at a rate higher than $B$, it cannot
send additional packets for the nodes to cache and use when needed. The only
option then seems to be to have very long queues, and let each node fill a
significant fraction of them with packets when the transmission of the flowlet
first begins, before the node starts forwarding packets for that flowlet.
However, this requires far too much state, and also adds significant latency in
terms of time to the first byte,  making this option unfeasible. The
apparent dilemma can be solved with a technique we call \emph{packet
	splitting}. 

The packet splitting technique allows an end host to create a packet
that can be split into two packets at a specific intermediate node.\footnote{The general
packet splitting technique supports a n-way split. We consider only two-way packet splits
because of limited Maximum Transmission Units (MTU) in the network.}
The resulting packets should be indistinguishable from other non-splittable packets.
This requirement indicates that the resulting packets should still traverse 
the same path and reach the recipient's end host. We present the
algorithm to split packets in Section~\ref{sec:specification}.

\paragraph{Traffic shaping for flowlet outgoing rate}
To enable end-to-end traffic shaping, for each on-path node $n_i$, an end host
selects a slot in its transmission buffer with probability $Pr_i^{split}$
and fills in a newly generated splittable chaff packet that will split at node
$n_i$. As an optimization, the end host can also select a slot that already
contains chaff packets and replace it with splittable chaff packets. When a node
receives a packet that should be split at the node, the node performs
the split and caches resulting packets in its chaff packet queue.

Each node maintains a per-flowlet chaff queue of cached chaff packets. To
guarantee an invariant outgoing flowlet rate, nodes periodically output a data
packet from the data packet queue. In case that the data packet queue is empty,
the node outputs a chaff packet from the flowlet's chaff queue. We limit the chaff queue size by a
maximal length  $L_{\mathit{chf}}$.
In the (unlikely) scenario where the chaff queue is also empty, a local
per-flowlet failure counter $h$ is increased. When $h$
exceeds a threshold $H$ negotiated during flowlet setup, the node terminates the
flowlet. $H$ is a security parameter of the flowlet that determines how
sensitive the flowlet is against potential malicious packet drops.

When a node shuts down a flowlet, an intermediate node no longer receives packets from
upstream nodes. It will first drain its local chaff packet queue and then terminate
the flowlet when the threshold $H$ is reached. We remark that such a termination process
results in successive termination on nodes and small variable intervals between
termination times on different nodes because of the variable number of cached chaff packets.

We remark that both the chaff queues and failure counters constitute per-flow state.
Nevertheless, the amount of state stored on a node is bounded by the node's bandwidth.
Because each flowlet consumes a fixed amount of bandwidth, a node
with fixed total bandwidth is only capable of serving a fixed number of flowlets. 
Thus, the amount of state that a node maintains for its flowlets is bounded given
its total available bandwidth. Accordingly, a node will have to refuse setup messages
for new flowlets if its bandwidth is already fully occupied.
We evaluate the amount of state the queues require in detail in Section~\ref{sec:evaluation}.

\subsection{Mixing in the Setup Phase}
\label{sec:mixing}

Each \name node applies a basic form of mixing when processing setup messages.
After a setup message is processed by an intermediate node, the
node queues the message locally into batches of size $m$. Once
there are enough setup messages to form a batch, the node first 
randomizes the message order within each batch and then sends out the batch. 

Through batching and order randomization, a \name node aims to obscure the timing 
and order for setup messages. An adversary that observes both input and output 
setup messages of a non-compromised node cannot match an output packet to 
its corresponding input packet within the batch.

The batching technique introduces additional latency because the setup messages
have to wait until enough messages are accumulated. Assume that $r_{setup}$
is the number of incoming setup messages every second, the added latency
can be computed as $\frac{m}{r_{setup}}$. Given the large number of simultaneous
connections within the network, the introduced latency is very low,
as shown by our evaluation in Section~\ref{sec:setupevaluation}.

\subsection{Link Encryption and Padding}
\label{sec:link_encryption}
Each pair of neighboring \name nodes agree upon a constant transmission rate upon link
setup. The negotiated transmission rate determines the maximum total rate for data packets.
When the actual transmission rate exceeds the negotiated rate on a link, the sending node drops
the excessive packets. When the actual transmission rate is lower than the negotiated
rate, the sending node will add chaff traffic.
The chaff traffic inserted by an intermediate node to shape traffic on a link only traverses
the link and is dropped by the neighboring node. 

To prevent an adversary observing a link between
two honest nodes from distinguishing chaff traffic
from actual data traffic, all pairs of neighboring nodes negotiate a symmetric key through
the Diffie-Hellman protocol, and use it to
encrypt all packets transmitted on their shared link. This also makes setup messages
and data packets indistinguishable. 

As an optimization to reduce chaff traffic and improve bandwidth usage, we
additionally allow neighboring nodes to agree on a schedule of transmission
rates as long as transmission rate is detached from the dynamics of individual
traffic rates. For example, because the actual link rate on a link often
demonstrates similarity at the same time of different days, we can reduce the
amount of chaff traffic by setting the transmission rate between $[t, t']$ to $ \mathcal{B}_{[t, t']} + k\cdot \Sigma_{[t, t']}$.
$\mathcal{B}_{[t, t']}$ is the historic average transmission rate between $[t, t']$,
$\Sigma_{[t, t']}$ is the standard deviation for the transmission rate,
$k$ is a factor that allows administrators to account for temporal changes
of the bandwidth usage.

\section{Protocol Details}
\label{sec:specification}
This section presents the details of \name data packet formats and processing functions. We show how to create a fixed-size packet that can be split into two new packets of the same size whose per-hop MAC can still be verified. Using the packet processing functions, we present the \name data transmission phase on end hosts and intermediate nodes.

\subsection{Notation}

\newcommand{\MAC}{\ensuremath{\mathsf{MAC}}}
\newcommand{\PRG}{\ensuremath{\mathsf{PRG}}}
\newcommand{\PRP}{\ensuremath{\mathsf{PRP}}}
\newcommand{\ENC}{\ensuremath{\mathsf{ENC}}}
\newcommand{\DEC}{\ensuremath{\mathsf{DEC}}}
\newcommand{\randomgen}{{\sc rand}}
\newcommand{\expTime}{\mbox{\sc exp}}
\newcommand{\ExpTime}{$\expTime$\xspace}

\providecommand{\str}{\ensuremath{\sigma}}

We first describe our notation. In general, $sym^{dir}$
stands for the symbol $sym$ of a specific direction $dir\in\{f,b\}$, which is
either forward (src to dst) or backward (dst to src).
$sym_{i}^{dir}$ indicates the symbol $sym$ belongs to $i$-th node $n_i^{dir}$
on the path in direction $dir$. For simplicity, we denote the set of all $sym$ for a path
$p^{dir}$ as $\{sym_i^{dir}\}$.  We also define a series of string operations:
$0^{z}$ is a string of zeros with length $z$; $|\str|$ is the length of the
string $\str$; $\str_{[m..n]}$ refers to the substring between $m$-th bit to
$n$-th bit of string $\str$ where $m$ starts from 0; $\str_1\concat \str_2$ stands for concatenation
of string $\str_1$ and $\str_2$. Table~\ref{tab:notation} summarizes the notation in
this paper.

\begin{table}[htbp]
	\scriptsize
	\begin{tabular}{c|m{6cm}}
		\hline
		Symbol & Meaning \\\hline
		$k$ & security parameter used in the protocol \\
		$c$ & size of per-hop segment \\
		$b$ & size of control bits and expiration time \\
		$r$ & maximum path length permitted by the protocol\\
		$m$ & fixed-size of a data packet payload  \\
		$p^{dir}$ & path of a specific direction $dir$ \\
		$l^{dir}$ & length of a path $p^{dir}$ \\
		$n_i^{dir}$ & the $i$-th node on path $p^{dir}$ \\
		$x_i^{dir}$, $g^{x_i^{dir}}$ & the private and public key pair of node $n_i^{dir}$\\
												$h_{op}$ & a hash function to generate the key for $op$\\
				$R$ & routing segment, e.g., the ingress and egress ports\\
		
				$\expTime_{i}$ & expiration time for a packet at node $n_i$\\
		$FS$ & forwarding segment \\
		$s$ & a symmetric onion key shared with the sender\\
		$IV$ & per-packet initial vector \\
		$\gamma$ & per-hop \MAC\\
		$\beta$ & the opaque component of a packet header \\
		$O$ & onion data packet \\
		\hline
	\end{tabular}
			\caption{Notation used in the paper.
					}
	\label{tab:notation}
\end{table}

\subsection{Initialization \& Setup Phase}

In the setup phase, the sender node aims to anonymously establish a set of shared keys
$\{s_i^{dir}\}$ with all nodes on the forward and backward path, and a shared
key $s_{SD}$ with the receiver. In the following protocol description and in our
implementation, we use \hornet{}'s Sphinx-based single-round-trip
setup~\cite{Chen2015Hornet}. Note that we can also set up flowlets using Tor's telescopic method~\cite{dms04} which increases latency, but preserves perfect forward secrecy.

Once the setup phase is complete, in addition to the shared keys, the sender also obtains from each node on
both paths a \emph{Forwarding Segment} (FS)~\cite[Section~4]{Chen2015Hornet}.
The FS created by the node $n_i^{dir}$ contains the key shared between the
sender and that node $s_{i}^{dir}$ and the routing information $R_i^{dir}$ which tells the
node how to reach the next hop on the path. The FS is encrypted using a secret
value known only to the router that created the FS.
As shown in Section~\ref{sec:datacommunicationphase}, these FSes are included in every data packet: each node can then the retrieve the FS it
created, decrypt it, and recover the packet processing information within.
Unlike \hornet, we do not store the expiration time \ExpTime in a FS,
but include it alongside the FS in the packet (see Section~\ref{sec:datapacketformat}).
This allows the sender to set a different expiration time for each packet and limit
the time window in which the packet is valid, which is necessary for replay
protection.

\subsection{Data Packet Processing}
\label{sec:sub:data_packet}
\subsubsection{Requirements}
\name data packets are fixed-size onion packets whose integrity is protected by per-hop MAC. Processing these packets should satisfy the following three requirements:
\begin{itemize}
	\item An output packet cannot be linked to the corresponding input packet without compromising the processing node's local secret value.
	\item Processing a packet cannot leak a node's position on the path.
	\item Processing a packet cannot change the packet size regardless of underlying operations.
\end{itemize}

The last requirement is particularly challenging to satisfy, since \name allows flow mutations. Consider the split operation, which takes a fixed-size packet and creates two uncorrelated packets of the same size. The splitting procedure needs to ensure that subsequent nodes can verify the MACs in both new packets.

\subsubsection{Data packet format}\label{sec:datapacketformat}
\name data packets are shown in Figure~\ref{fig:pkt_format}.
At the beginning of each packet is an $IV$ field that carries a fresh initial
vector for each packet in a flowlet. After the $IV$ field are four fields
that form an onion layer: an FS, a per-hop MAC, control bits, and the expiration time. The rest of
the fields, including the rest of header information, padding bits, and the
payload, are encrypted, and are thus opaque to the processing node.

When a packet arrives, the first three fields are accessible to a node without requiring cryptographic processing, so we call
these fields as public state. The control bits and the expiration time are only available after the node decrypts the packet, so they are called secret state. In addition, each
header is padded to a fixed size regardless of the actual number of nodes on
the path, and the padding bits are inserted between the header and the payload.

\begin{figure}
	\centering
	\includegraphics[width=0.40\textwidth]{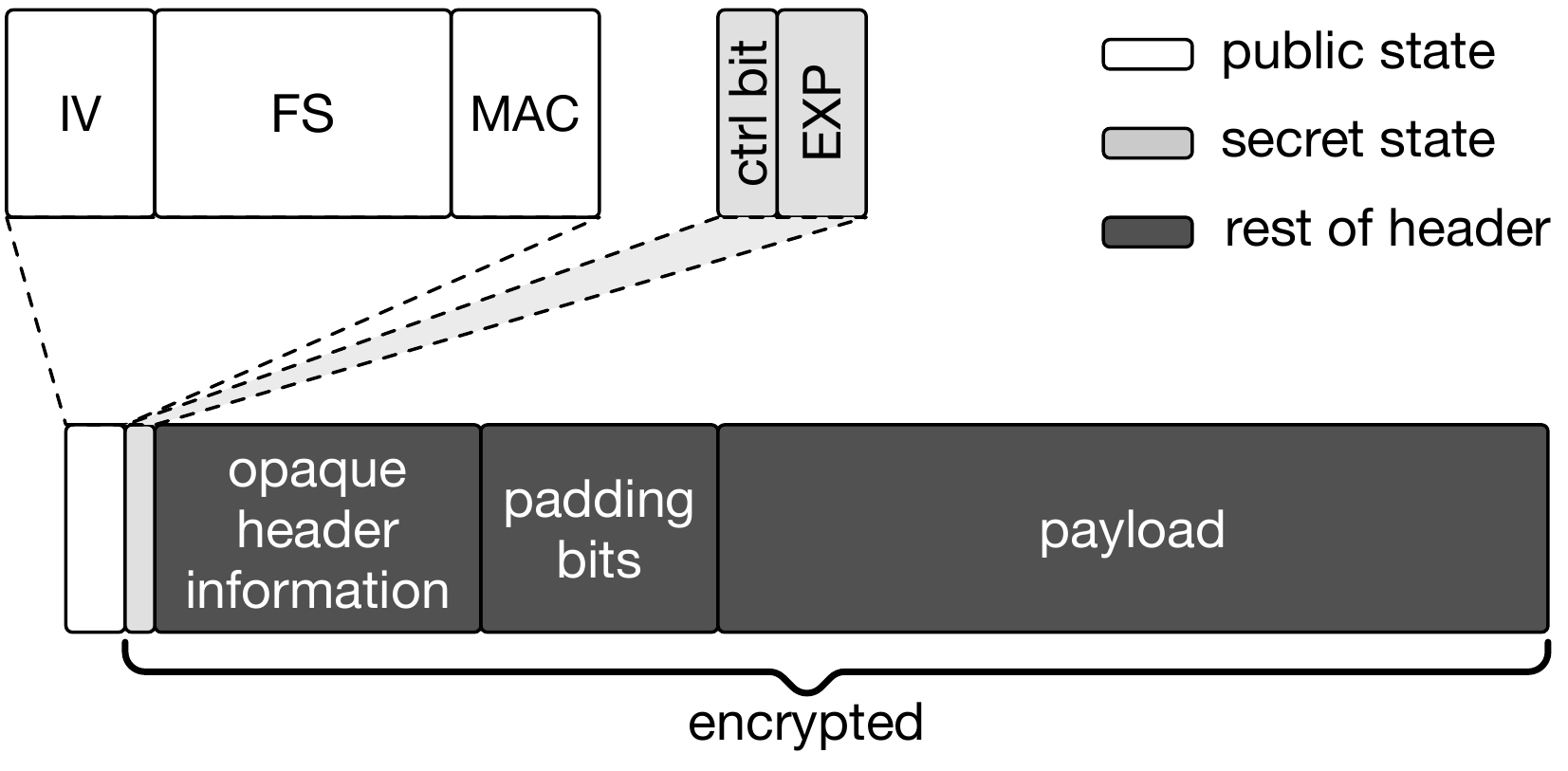}
	\caption{\name packet format.}
	\label{fig:pkt_format}
\end{figure}

\newcommand{\removeOnionLayer}{\mbox{{\sc remove\_onion\_layer}}}
\newcommand{\RemoveOnionLayer}{$\removeOnionLayer$\xspace}

\subsubsection{\name packet creation}
Both end hosts generate data packets using a subroutine shown in
Algorithm~\ref{alg:create_pkt}. The subroutine creates an onion packet to be
forwarded from node $n_k$ to node $n_l$. For each onion layer, it computes a per-hop
MAC (Line~\ref{alg:create_pkt:gamma}) and onion-encrypts both the header
(Line~\ref{alg:create_pkt:beta}) and the payload
(Line~\ref{alg:create_pkt:encrypt}).

One important feature of this onion encryption algorithm is to add per-hop
state (specifically, an FS, a MAC, control bits and an expiration time) to the packet header
without changing its total size. The function achieves this feature by
strategically pre-computing the padding bits in the header
(Line~\ref{alg:create_pkt:padding}) to ensure that the trailing $c$ bits of
header after encryption are always equal to $0^c$. As a result, the trailing
zero bits can be truncated without losing information when the header is
encrypted again (Line~\ref{alg:create_pkt:beta}).

\newcommand{\createDataPacketRoutine}{\mbox{\sc create\_onion\_routine}}
\newcommand{\CreateDataPacketRoutine}{$\createDataPacketRoutine$\xspace}

\begin{algorithm}\footnotesize
	\begin{algorithmic}[1]
		\Procedure{create\_onion\_routine}{}
		\Statex Input: $\{s_i\}$, $\{FS_i\}$, $\{ctrl_i\}$, $\{\expTime_i\}$, $IV$, $k$, $l$, $O$
		\Comment{with $k<l$}
		\Statex Output: $(IV_k, \FS_k, \gamma_k, \beta_k, O_k)$
		\State $\phi_k \gets \varepsilon$
		\State $IV_k \gets IV$
		\For{$i \gets k+1,\dots, l$}
		\State $IV_i \gets \PRP(h_{\PRP}(s); IV_{i-1})$
		\State $\phi_i \gets (\phi_{i-1} \concat 0^{c}) \;\oplus$
		\StatexIndent[3] $ \PRG(h_{\PRG}(s_{i-1}\oplus IV_{i-1}))_{[(r-i-1)c + b .. rc + b - 1]}$ \label{alg:create_pkt:padding}
		\EndFor

		\State $\beta_{l} \gets \big\{$\randomgen$(c(r-l-1)) \concat \phi_{l}\big\}$ \label{alg:create_pkt:start}
		\State $O_{l} \gets \ENC(h_{\ENC}(s_l); IV_{l}; O)$
		\State $\gamma_{l} \gets \MAC(h_{\MAC}(s_{l} \concat IV_{l}); \FS_l \concat \beta_{l} \concat O_{l})$
		\For{$i \gets (l-1), \ldots, k$}
		\State $\beta_{i} \gets \left\{ctrl_i \concat \expTime_i \concat \FS_{i+1} \concat \gamma_{i+1} \concat {\beta_{i+1}}_{[0 .. c(r-2)-1]}\right\}$
		\StatexIndent[4] $\vphantom{0} \oplus \PRG(h_{\PRG}(s_i \concat IV_{i}))_{[0 .. b+(r-1)c-1]} $ \label{alg:create_pkt:beta}
		\State $\gamma_{i} \gets \MAC(h_{\MAC}(s_{i} \concat IV_{i}); \FS_i \concat \beta_{i} \concat O_{i})$
		\State $O_i \gets \ENC(h_{\ENC}(s_i); IV_{i}; O_{i+1})$
		\label{alg:create_pkt:encrypt}
		 \label{alg:create_pkt:gamma}
		\EndFor
		\EndProcedure
	\end{algorithmic}
	\caption{Create a partial data packet.}
	\label{alg:create_pkt}
\end{algorithm}

Normally, an end host creates a packet that traverses the whole path from the
first node $n_0^{dir}$ to the last node $n_{l^{dir}-1}^{dir}$. It generates
such a packet by setting $k=0$, $l=l^{dir}-1$, and all $ctrl_i =
\mbox{FWD}$ in function \CreateDataPacketRoutine.

\paragraph{Generate splittable packets}
Creating a data packet that can be split into two packet requires an end host
to first create two children packets and then merge them into a single packet.
Because we require all packets to have the same size, i.e., both children
packets have to be of the same size as their parent, the key challenge is to
guarantee that the per-hop MACs in the children packets successfully verify
even after the splitting node adds padding bits to the children packets. For
this reason, the splitting node generates padding bits by a PRG keyed by the
key shared with the end host, so that the end host can predict the padding bits
and pre-compute the per-hop MACs in both resulting packets accordingly.

\newcommand{\createSplittableDataPacket}{\mbox{\sc create\_splittable\_data\_packet}}
\newcommand{\CreateSplittableDataPacket}{$\createSplittableDataPacket$\xspace}

Algorithm~\ref{alg:create_split_pkt} shows the function to
create a splittable data packet. At a high level, \CreateSplittableDataPacket
invokes the \\
\CreateDataPacketRoutine three times: it first creates two 
children packets using
\CreateDataPacketRoutine
(Line~\ref{alg:create_split_pkt:create_left} and
\ref{alg:create_split_pkt:create_right}), merges the resulting packets into a
new payload (Line~\ref{alg:create_split_pkt:merge}), and finally executes
\CreateDataPacketRoutine again to generate the parent packet
(Line~\ref{alg:create_split_pkt:create_pkt}). To ensure the
correctness of the per-hop MACs in the children packets after the payloads are
padded, the function generates the padding bits using a PRG keyed by the shared key
between the end host and the splitting node so that the latter can re-generate the padding bits accordingly
(Lines~\ref{alg:split_onion:padding_left} and
\ref{alg:split_onion:padding_right}). After the MACs are computed for the
children packets, the deterministic padding bits are truncated so that two
children packets can fit into the payload of their parent packet.

\begin{algorithm}
	\footnotesize
	\begin{algorithmic}[1]
		\Procedure{create\_splittable\_data\_packet}{}
		\Statex Input: $\{s_i\}$, $\{FS_i\}$, $\{ctrl_i\}$, $\{\expTime_i\}$, $IV$, $IV_0$, $IV_1$ $O_0$, $O_1$, $k$
		\Statex Output: $(IV_0, \FS_0, \gamma_0, \beta_0, O_0)$

		\State $O_0 \gets O_0 \concat \PRG(h_{\PRG}((s_k \oplus IV_0) \concat$  $\mbox{``left''}))_{[0..\frac{m}{2} + rc - 1]}$
		\label{alg:create_split_pkt:padding_left}
		\State $(IV_0', FS_0', \gamma_0', \beta_0', O_0') \gets $
		\StatexIndent[2] $\createDataPacketRoutine(\{s_i, \forall i \ge k\}, \{FS_i, \forall i \ge k\},$
		\StatexIndent[3] $\{\mbox{FWD}\}, \{\expTime_i, \forall i \ge k\}, IV_0, k, l_{dir}-1, O_0)$
		\label{alg:create_split_pkt:create_left}
		\State $O_1 \gets O_1 \concat \PRG(h_{\PRG}((s_k \concat IV_1) \concat$
			$\mbox{``right''}))_{[0..\frac{m}{2} + rc - 1]}$
		\label{alg:create_split_pkt:padding_right}
		\State $(IV_1', FS_1', \gamma_1', \beta_1', O_1') \gets$
		\StatexIndent[2] $\createDataPacketRoutine(\{s_i, \forall i \ge k\},$
		\StatexIndent[3] $\{FS_i, \forall i \ge k\}, \{\mbox{FWD}\}, IV_1, k, l_{dir}-1, O_1)$
		\label{alg:create_split_pkt:create_right}

		\State $O' \gets (IV_0', FS_0', \gamma_0', \beta_0', \{O_0'\}_{[0..\frac{m}{2} + rc - 1]}) \concat \vphantom{}$
		\StatexIndent[3] $(IV_1', FS_1', \gamma_1', \beta_1', \{O_1'\}_{[0..\frac{m}{2} + rc - 1]})$
		\label{alg:create_split_pkt:merge}

		\State $(IV_0, FS_0, \gamma_0, \beta_0, O_0) \gets$
		\StatexIndent[2] $\createDataPacketRoutine(\{s_i, \forall i < k\},$
		\StatexIndent[3] $\{FS_i, \forall i < k\}, \{\mbox{FWD, \dots, FWD, SPLIT}\},$
		\StatexIndent[3] $\{\expTime_i, \forall i < k\}, IV, 0, k-1, O')$
		\label{alg:create_split_pkt:create_pkt}
		\EndProcedure
	\end{algorithmic}
    \caption{Create a data packet that can be split into two new packets.}
	\label{alg:create_split_pkt}
\end{algorithm}

\subsubsection{Onion layer removal}

Nodes remove onion layers when processing data packets.
It essentially reverses a single step of \CreateDataPacketRoutine.
Algorithm~\ref{alg:remove_onion_layer} details this five step process. First,
the intermediate node retrieves the symmetric onion key $s$ shared with the
sender (Line \ref{alg:remove_onion_layer:key}); second, the node verifies a
per-hop MAC using a key derived from $s$ (Line~\ref{alg:remove_onion_layer:mac}); third,
the node ensures that the packet's size remains unchanged by adding padding
bits to the header and decrypting the resulting padded header with a stream
cipher; fourth, the control bits are extracted
(Line~\ref{alg:remove_onion_layer:ctrl}); finally, the payload is decrypted
(Line~\ref{alg:remove_onion_layer:payload}) and the next initialization vector
is obtained by applying a \PRP~keyed with $s$ to the current $IV$
(Line~\ref{alg:remove_onion_layer:iv}).

Note that the onion layer removal algorithm is different from a simple
decryption in two ways. First, the size of the packet remains the same after
processing, which prevents leaking information about the total number of hops
between the sender and receiver. Second, the processing only happens at the
head of the packet, which reveals no information about the processing node's
position on the path.

\newcommand{\fsCreate}{\mbox{{\sc fs\_create}}}
\newcommand{\FsCreate}{$\fsCreate$\xspace}
\newcommand{\fsOpen}{\mbox{{\sc fs\_open}}}
\newcommand{\FsOpen}{$\fsOpen$\xspace}

\begin{algorithm}
	\footnotesize
	\begin{algorithmic}[1]
		\Procedure{remove\_layer}{}
		\Statex Input: $P$, $SV$
		\Statex Output: $ctrl$, $P^{o}$, $R$, $\expTime$
		\State $\{IV \concat \FS \concat \gamma \concat \beta \concat O\} \gets P$ \label{alg:remove_onion_layer:fs}
		\State $s \concat R \gets \PRP^{-1}(SV, \FS)$  \label{alg:remove_onion_layer:key}
		\State \textbf{check} $\gamma = \MAC(h_{\MAC}(s \concat IV); \FS \concat \beta \concat O)$ \label{alg:remove_onion_layer:mac}
		\State $\zeta \gets \{\beta \concat 0^{c} \} \oplus \PRG(h_{\PRG}(s \concat IV))_{[0 ... (r-1)c+b-1]}$ \label{alg:remove_onion_layer:decrypt}
		\State ${ctrl \concat \expTime \concat \FS' \concat \gamma' \concat \beta'} \gets \zeta$ \label{alg:remove_onion_layer:ctrl}
		\State $O' \gets \DEC(h_{\DEC}(s); IV; O)$ \label{alg:remove_onion_layer:payload}
		\State $IV' \gets \PRP(h_{\PRP}(s); IV)$ \label{alg:remove_onion_layer:iv}
		\State $P^o \gets \{IV' \concat \FS' \concat \gamma' \concat \beta' \concat O'\}$
		\EndProcedure
	\end{algorithmic}
	\caption{Remove an onion layer.}
	\label{alg:remove_onion_layer}
\end{algorithm}

Depending on the value of control bits $ctrl$, the intermediate node performs one of the following two actions: FWD, or SPLIT. A node can split a data packet into two new packets by Algorithm~\ref{alg:split_onion_layer}.
First, the payload is split into two new packets (Line~\ref{alg:split_onion:split}).
Then the node pads both newly generated packets to the fixed size $m$ using pseudo-random bits obtained from a PRG keyed by $s$ (Line~\ref{alg:split_onion:padding_left} and \ref{alg:split_onion:padding_right}).

\newcommand{\splitOnionPacket}{\mbox{\sc split\_onion\_packet}}
\newcommand{\SplitOnionPacket}{$\splitOnionPacket$\xspace}

\begin{algorithm}
	\footnotesize
	\begin{algorithmic}[1]
		\Procedure{split\_onion\_packet}{}
		\Statex Input: $O$, $s$, $IV$
		\Statex Output: $P^{o}_0$, $P^{o}_1$
		\State $\{P_0'  \concat P_1'\}  \gets O$ \label{alg:split_onion:split}
										\State $P_0^o \gets P_0' \concat \PRG(h_{\PRG}((s \concat IV) \concat \mbox{``left''}))_{[0..\frac{m}{2} + rc - 1]}$
		\label{alg:split_onion:padding_left}
						\State $P_1^o \gets P_1' \concat\PRG(h_{\PRG}((s \concat IV) \concat \mbox{``right''}))_{[0..\frac{m}{2} + rc - 1]}$
		\label{alg:split_onion:padding_right}
		\EndProcedure
	\end{algorithmic}
	\caption{Split a data packet into two new packets.}
	\label{alg:split_onion_layer}
\end{algorithm}

\subsection{Data Transmission Phase}
\label{sec:datacommunicationphase}
\subsubsection{End host processing}
To send packets to receiver $D$, sender $S$ first makes sure that the
flowlet has not expired. Then $S$ chooses a value $\expTime_{min}$,
which has to be larger than its local time plus the end-to-end
forwarding delay plus the maximum global clock skew.
We expect that adding 1~s to the local time would be adequate for most circumstances.
However, $S$ cannot set the packet expiration time to be equal at every
hop, as otherwise this value could be used as common identifier (which violates
the bit-pattern unlinkability property.
Instead, $S$ chooses an offset
$\Delta_i \in [0, \Delta_{max}]$ uniformly at random, for each node $n_i^f$ on the path.
For every packet sent out,  $S$ determines $\expTime_{min}$ and
computes $\expTime_i = \expTime_{min} + \Delta_i$ for each node.
The value $\Delta$ needs to be chosen large enough to ensure that the interval
$[\expTime_{min}, \expTime_{min} + \Delta]$ overlaps with the intervals of a
large number of other concurrent flows. We expect that $\Delta \approx 5$~s would
be a safe choice.\label{sec:exptime}

After determining $\{\expTime_i\}$, $S$ also needs to decide which flow mutation actions
the packet will adopt. In case of packet splitting, $S$ also needs
to decide where to split the packet. For a packet that is forwarded to
the receiver without being split, we denote the payload to send is $O$.
For a packet that is split, we denote the payloads of the children packets
as $O_0$ and $O_1$. Let $k$ be the index of the node where the packet is split. Accordingly, $ctrl_i=\mbox{FWD}$, $\forall i \neq k$.
Third, $S$ uses $s_{SD}$ to encrypt the payload. This end-to-end encryption prevents the last hop node from obtaining information about the data payload.
$S$ also generates a unique nonce $IV$ for the packet. If the packet is splittable, $S$ generates another two unique nonces $IV_0$ and $IV_1$.
Fourth, if the packet will be split, $S$ creates the packet $P$ by
	\begin{multline}\small
	P=\createSplittableDataPacket(\{s_i^f\}, \{FS_i^f\}, \\
	  \{ctrl_i^f\}, \{\expTime_i^f\}, IV, IV_0, IV_1, O_0, O_1, k)
	\end{multline}
If the packet will only be forwarded to the receiver without a splitting action, $S$ creates the packet $P$ by
	\begin{multline}\small
	P = \createDataPacketRoutine(\{s_i^f\}, \{FS_i^f\}, \{ctrl_i^f\}, \\
	\{\expTime_i^f\}, IV, 0, l^f - 1, O)
	\end{multline}
Finally, $S$ forwards $P$ to the first hop node towards the receiver.

The process by which $D$ sends packets back to $S$ is similar to the above
procedure, but $D$ will use the forwarding segments and onion keys for the
backward path. However, right after $S$ finishes the setup phase, $D$ has not
yet obtained $g^{x_S}$, $\{s_i^b\}$, nor $\{FS_i^b\}$. In the \name data
transmission phase, the first packet that $S$ sends to $D$ includes $x_S$,
$\{s_i^b\}$ and $\{FS_i^b\}$ as the payload.

When an end host ($S$ or $D$) receives a data packet $P$, it can retrieve the data payload $O$ from the packet by $O = P_{[rc..rc + m-1]}$
The resulting $O$ can thus be decrypted by $s_{SD}$ to retrieve the plaintext payload.

\subsubsection{Intermediate node processing}
When a node receives a data packet $P=(IV, FS, \gamma, \beta, O)$, with the local secret $SV$, it first removes an onion layer by
	\begin{equation}\small
	 ctrl, P^{o}, R, \expTime = \removeOnionLayer(P, SV)
	\end{equation}
Note that the MAC must check in \RemoveOnionLayer for the process to move on. Otherwise, the node simply drops the packet.
Then, the node checks $t_{curr} < \expTime$ and ensures that the flowlet has not expired.
Afterwards, the node checks the control bits belonging to the current hop. If $ctrl=\mbox{SPLIT}$, the resulting payload $P^o$ must contain two sub packets.
The node creates two children packets $P_0^o$, $P_1^o$:
	\begin{equation}\small
	\{P_0^o, P_1^o\} = \splitOnionPacket(O, s, IV)
	\end{equation}
Lastly, if the packet is not dropped, the node forwards the resulting packet according to the routing decision $R$.
 
\section{Security Analysis}
\label{sec:security}
We discuss \name's defenses against passive
(Section~\ref{sec:sub:passive_attacks}) and active attacks
(Section~\ref{sec:sub:active_attacks}). 
We also conduct a quantitative analysis of 
\name's anonymity set size using the Internet 
topology and real-world packet traces (Section~\ref{sec:sub:ass}).
Our result shows that \name's anonymity set is 4 to 2$^{18}$ times
larger than those of \lap and \dovetail.
Finally, we present a formal proof that the \name protocol 
conforms to an ideal onion routing protocol defined by Camenisch
and Lysyanskaya~\cite{camenisch2005formal}.

\subsection{Defense against Passive Attacks}
\label{sec:sub:passive_attacks}

\paragraph{Flow dynamics matching}
In flow-dynamics matching attacks~\cite{danezistrafficanalysis2004,
	murdoch2007sampled}, adversarial nodes can collude to match two observed
flows by their dynamics, such as transmission rate. \name prevents such attacks
by normalizing the outgoing transmission rate of all flowlets through the use
of chaff traffic. Adversarial nodes are unable to distinguish chaff traffic from
real traffic. Accordingly, no flow dynamics are available to the adversary to
perform matching.

\paragraph{Template attacks}
\name enables end hosts to shape their traffic by adding chaff packets to hide
their real traffic patterns. The resulting traffic pattern of an outgoing
flowlet is uniform across the network. In addition, all \name packets have the
same length, preventing information leakage from packet length. The combination
of these two features completely neutralizes template attacks.

\paragraph{Network statistics correlation}
These attacks rely on the capability of the adversary to observe macroscopic
flow characteristics which leak de-anonymizing information. Because of the
uniformity of flowlets, no such information is leaked in \name for isolated
unidirectional flows. However, if the attacker is able to link the flowlets
corresponding to a bidirectional flow by their starting or ending time, then
an attack based on the RTT (see Section~\ref{sec:ta_passive}) could still be
possible. Such an attack can be thwarted by adding delays for setup packets and
flowlet start at the receiver, according to the path length (the shorter the
path, the longer the delay), as suggested by previous
work~\cite[Section~5.1]{Chen2015Hornet}.

\subsection{Defense against Active Attacks}
\label{sec:sub:active_attacks}

\paragraph{Tagging attacks}
A compromised node can modify packets adding tags that are recognizable by
downstream colluding nodes. This enables flow matching across flows observed at
different nodes~\cite{pries2008new}. \name defends against such attacks through
its per-hop packet authenticity (see Section~\ref{sec:specification}). A benign
node will detect and drop any modified packet.

\paragraph{Clogging attacks}
In clogging attacks, an adversary intentionally causes network
congestion~\cite{murdoch2005low, evans2009practical}, or
fluctuation~\cite{chakravarty2010traffic}
to create jamming or noticeable network jitter on relay nodes, and match
such patterns to deanonymize the path. Different from throughput fingerprint
attacks that aim to exert no influence on existing traffic patterns, clogging
attacks aggressively change the traffic patterns on victim links and are prone
to detection. First, clogging attacks in \name itself require DDoS capabilities
because of nodes' high bandwidth within the network.  In addition, \name nodes
attacked by clogging would run out of cached chaff packets, which in turn shuts
down the flowlet and prevents any additional matching. Moreover, given the large
number of flowlets in the network at any given time, the number of flowlets
terminated due to normal operations is large, which hides the fact that
the specific attacked flowlet is terminated.

\paragraph{Flow dynamics modification attacks}
Traffic pattern modulation attacks require attackers to modulate inter-packet
intervals to either create recognizable patterns (e.g., flow watermarking
attacks~\cite{houmansadr_rainbow:_2009, houmansadr_swirl:_2011}), or embed
identity information (e.g. flow fingerprinting
attacks~\cite{houmansadr2013}), so that downstream adversarial nodes can
deanonymize traffic by extracting the introduced traffic patterns.
Depending on the amount of perturbation introduced by the adversary, we can distinguish two
cases. In the first one, the adversarial actions fail to exhaust the cached
chaff packets on the node under attack for the target flowlet. In this case, the
outgoing rate for the flowlet at the node remains unchanged, and the attack is
ineffective. In the second case, the victim node runs out of cached chaff
packets for the target flowlet. In this case, the node terminates the flowlet to
prevent downstream nodes from observing the injected patterns.

 \subsection{Anonymity Set Size Evaluation}
 \label{sec:sub:ass}
 
 \carmela{I do like the analysis below, but if I understand correctly this
 	analysis explains how a given AS can learn from where a flow comes from given
 	the AS topology. This a fair analysis, but it seems like a partial adversary and
 	not a global one. So how does it relate to the threat model? \\
 	Then, I do not really see how it related to the use of batching in the setup
 	phase. Batching, if I got it correctly, happens at every router, right? does
 	that make any difference? If not, if the only thing the adversary can see are
 	inputs to ASs I am not sure how batching helps at any point.
 }
 
 \paragraph{Relationship anonymity set}
 Network-layer anonymity protocols are vulnerable to passive attacks based on
 network topology information launched by a single compromised AS. Compared to
 overlay-based anonymity systems~\cite{dms04} that allows global re-routing,
 traffic of network-layer anonymity protocols follows paths created by underlying
 network architectures. By observing the incoming and outgoing
 links of a packet, a compromised AS can derive network location
 information of communicating end hosts. For example, in
 Figure~\ref{fig:relationship_anonymity_example}, by forwarding a packet from AS1
 to AS3, AS2 knows that the sender must reside within the set \{AS0, AS1\} and
 the receiver falls into the set \{AS3, AS4, AS5\}. We name the former
 anonymity set \emph{sender anonymity set}, denoted as $S_s$, and
 call the latter anonymity set \emph{recipient
 	anonymity set}, denoted as $S_d$. Accordingly, we define \emph{relationship
 	anonymity set} $S_r = \{(s, d) | s \in S_s, \, d \in S_d\}$.
 
 To evaluate relationship anonymity of different protocols, we use anonymity-set size as the metric. By definition of $S_r$, the anonymity-set size $|S_r| = |S_s| \times |S_d|$. In Figure~\ref{fig:relationship_anonymity_example}, there are 8 hosts in both AS0 and AS1. Thus, $|S_s| = 16$. Similarly, we can calculate that $|S_d| = 24$ and $|S_r| = 16 \times 24 = 384$.

 Protocol designs influence corresponding anonymity-set sizes. In \lap and
 \dovetail, by analyzing header formats, a passive adversary can determine its
 position on the packet's path, i.e., its distances from the sender and the
 receiver~\cite{Hsiao2012, Sankey2014}. In
 Figure~\ref{fig:relationship_anonymity_example}, if the adversary in AS2 knows
 the sender is 2 hops away and the receiver is 1 hops away through analyzing
 packet headers, it can deduce that the sender must be in AS0 and the receiver
 must be in AS3. The resulting anonymity-set size is reduced to 8 * 8 = 64. In
 comparison, \name and \hornet's header designs prevent their headers from
 leaking position information.
 
 \paragraph{Experiment setup}
 We use a trace-based simulation to evaluate anonymity set sizes of different
 network-layer anonymity protocols in real world scenarios. We obtain the
 real-world AS-level topology from CAIDA AS relationship
 dataset~\cite{caida_as_rel}. We also annotate each AS with its IPv4 address
 space using the Routeview dataset~\cite{routeview}. In addition, we estimate
 real-world paths using iPlane traceroute datasets~\cite{iplane_dataset}. We use
 the traceroute traces on Dec. 12th, 2014. For each IP address--based trace, we
 convert it to AS path. Our preliminary analysis shows that the median AS path
 length is 4 and the average AS path length is 4.2. More than 99.99\% of AS paths
 have length less than 8.
 
 For each AS on a path in our path dataset, we compute the sizes of the
 relationship anonymity sets observed by the compromised AS in one of two
 scenarios: 1) the AS knows its position on path as in \lap and \dovetail;
         2) the AS has no information about its position on the path as in \hornet and \name. To compute
 anonymity set sizes, we first derive relationship anonymity sets composed by
 ASes. Then we compute the number of hosts in the ASes as the size of anonymity
 set size. We approximate the number of hosts within an AS by the number of IPv4
 addresses of that AS.

 \begin{figure*}[!btp]
 	\centering
 	\subfigure[Example scenario]{
 		 		\includegraphics[width=0.3\textwidth]{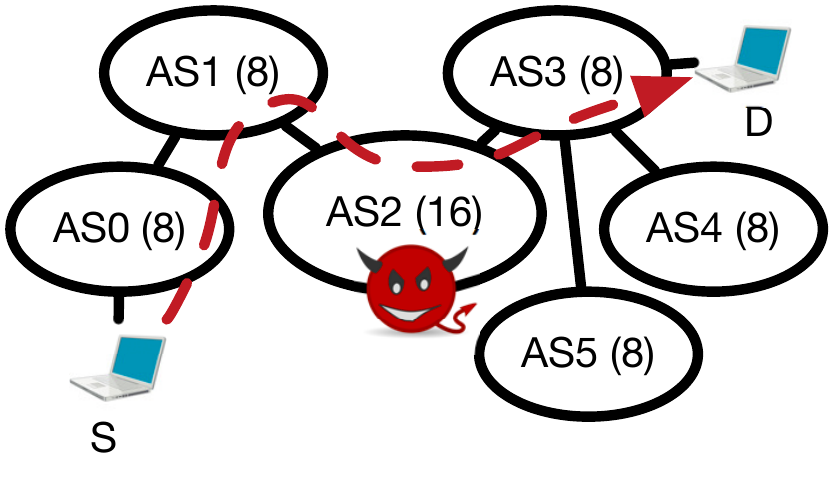}
 		\label{fig:relationship_anonymity_example}
 	}
 	\subfigure[\lap and \dovetail]{
 		\includegraphics[width=0.3\textwidth]{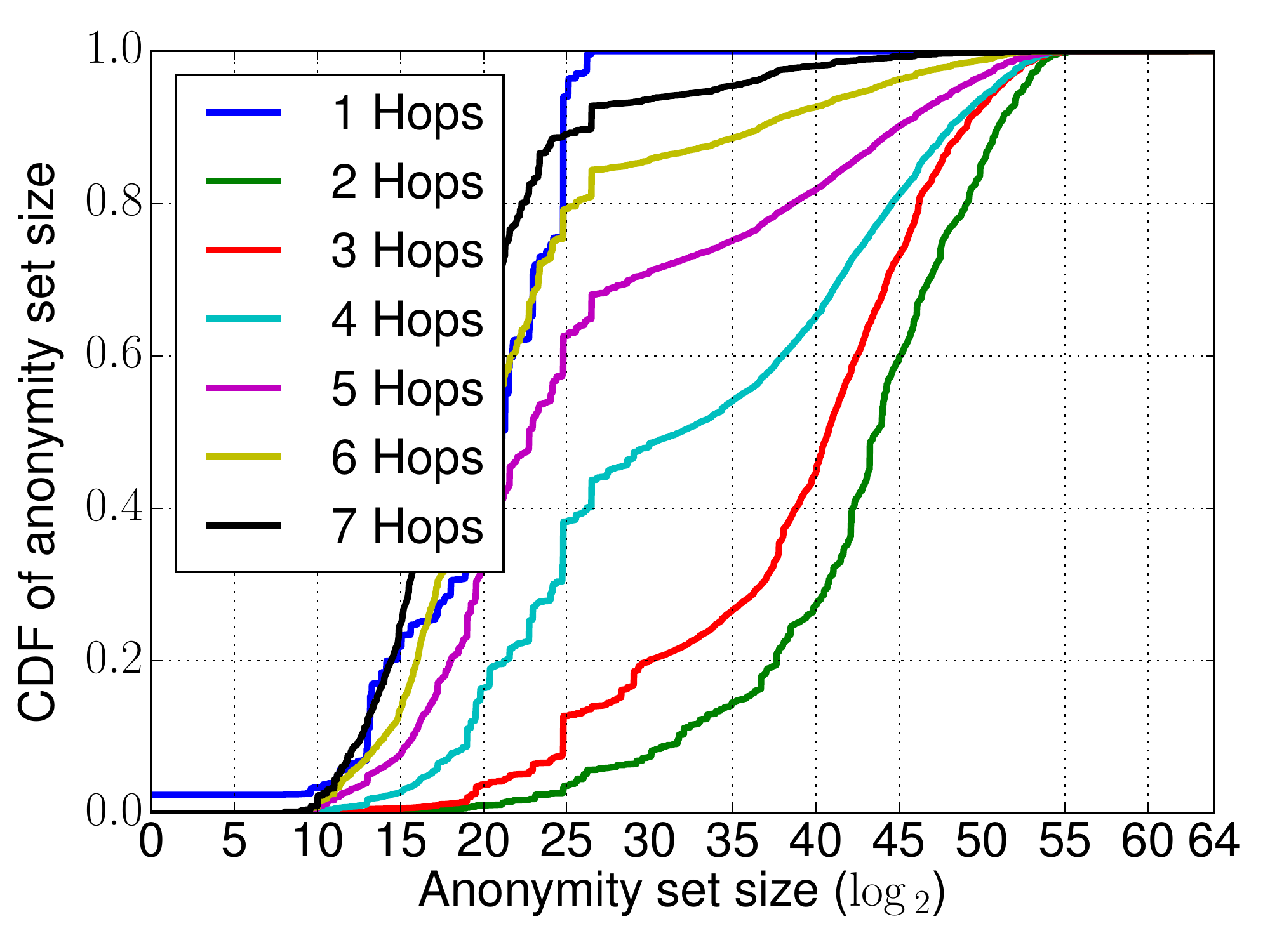}
 		\label{fig:anonymity_set_size_hop}
 	}
 	\subfigure[\name and \hornet]{
 		 		\includegraphics[width=0.3\textwidth]{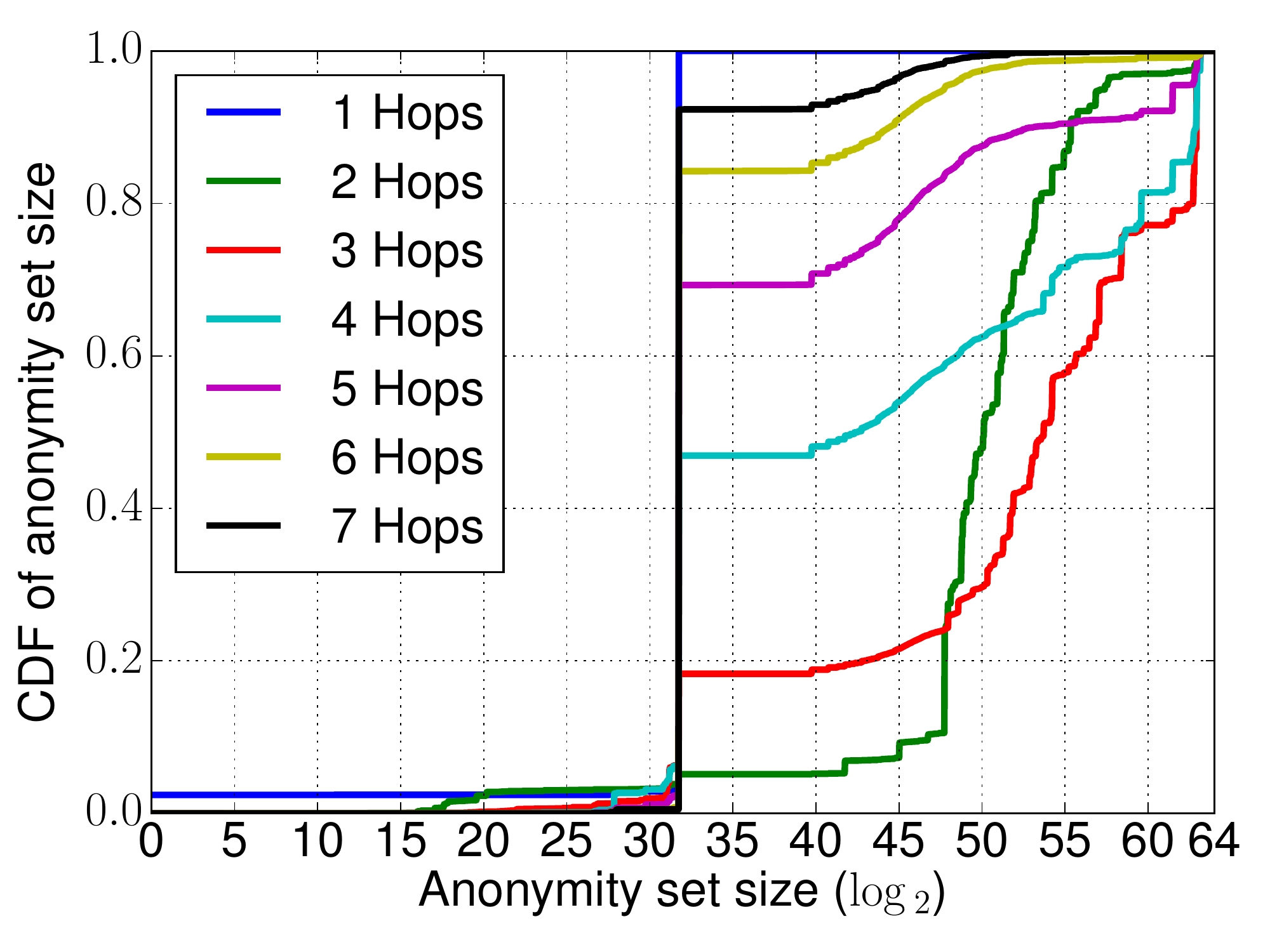}
 		\label{fig:anonymity_set_size_full}
 	}
 	\caption{a) A toy example of an adversary that exploits topology information
 		to de-anonymize a flowlet between sender $S$ and receiver $D$. AS$x$
 		($y$) denotes an AS with AS number $x$ and $y$ hosts attached. We assume
 		that the adversary compromised AS2. b) Cumulative Distribution Functions
 		(CDF) of anonymity set sizes for \lap and \dovetail. c) CDFs of anonymity
 		set sizes for \name and \hornet. In both b) and c), different lines
 		demonstrate anonymity set size distribution observed by an adversary that is
 		a fixed number of AS hops away from $S$.
 		 		 		 		 		 		 		 		 	}
 \end{figure*}
 
 \paragraph{Result}
 Figure~\ref{fig:anonymity_set_size_hop} demonstrates CDFs of anonymity-set sizes
 for \lap and \dovetail observed by a compromised AS.
 Figure~\ref{fig:anonymity_set_size_full} shows the CDF of anonymity-set sizes
 for \name and \hornet. In general, anonymity-set sizes of \name and \hornet
 exceed $2^{32}$ with probability larger than 95\% regardless
 of the adversary's on-path positions.
       The 90th percentiles of anonymity-set sizes of \name and \hornet are
 4--$2^{18}$ times larger than those of \lap and \dovetail depending on the
 distances between senders and receivers. We remark that when an AS is 6 or 7
 hops away from a sender, it is the last-hop AS with high probability, because
 99.99\% paths are less than 8 hops long.
     When the compromised ASes are 1 hop away
 from senders and when the ASes are close to receivers (6--7 hops away from
 senders), the gap between \name/\hornet and \lap/\dovetail is largest.

 \paragraph{Topology-based attacks and traffic analysis}
 In \lap, \dovetail, and \hornet, when an adversary compromises more than 
 1 AS on a path, he/she can correlate observation
 from different non-adjacent ASes by traffic analysis, such as flow fingerprint
 attacks~\cite{houmansadr2013}, to facilitate topology-based attacks.
 Assume that an adversary compromises $q$ ASes and observes a series of sender
 anonymity sets $\{S_s^i; i\in[1, q]\}$ and a series of recipient anonymity sets
 $\{S_d^i; i\in[1, q]\}$. The resulting relationship anonymity set size 
 $|S_r| = \min_{i\in[1, q]}|S_s^i| \times  \min_{i\in[1, q]}|S_d^i|$.
 For example, in Figure~\ref{fig:relationship_anonymity_example}, if the adversary
 compromises AS0 besides AS2 and correlates traffic from
 the same flowlet, the resulting relationship anonymity-set size $|S_r|$ is only
 $24$ ($1 \times 24$) compared to $384$ when only AS2 is compromised.
 
 \name improves over \lap, \dovetail, and \hornet by introducing defense against 
 traffic analysis (see Section~\ref{sec:sub:passive_attacks} and
 \ref{sec:sub:active_attacks}). By defeating traffic
 analysis and preventing correlation of flowlets at multiple non-adjacent
 ASes, \name enlarges the observed relationship anonymity set size.
 The resulting relationship anonymity set is only the smallest one 
 among the relationship anonymity sets observed by non-collaborative
 compromised ASes. $|S_r| = \min_{i\in[1, q]}|S_s^i| \times |S_d^i|$.
 For instance, when the adversary
 compromises AS0 besides AS2 and uses traffic analysis to correlate
 observed flowlets, the resulting relationship anonymity-set size $|S_r|$ 
 increased to $56$.
 
 \subsection{Formal Proof of Security}
 \label{sec:appendix:proof}
 
 Proof of \name security comes into two parts: the security proof of \name's setup phase protocol, and the security proof of \name's data transmission phase. 
 We derive the security of \name's setup phase by the security of the \sphinx protocol~\cite{Danezis2009},
 because \name's setup phase protocol directly uses the \sphinx protocol and Danezis and Goldberg have demonstrated that the \sphinx protocol realizes an ideal onion routing protocol defined by Camenisch and Lysyanskaya~\cite{camenisch2005formal}.
 
 In this section, we focus on the security of \name's data transmission phase and prove that \name's data transmission phase is equivalent to an ideal onion routing protocol based on UC framework~\cite{camenisch2005formal}. According to Camenisch and Lysyanskaya, a protocol is an ideal onion routing protocol if it offers four properties: \emph{correctness}, \emph{integrity}, \emph{wrap-resistance}, and \emph{security}. We briefly rephrase the definitions of four properties as follows:
 \begin{itemize}
 	\item \emph{Correctness}. The protocol should operate correctly without adversaries.
 	\item \emph{Integrity}. There exists an upper bound $N$ for the protocol, such that an adversary cannot forget a message that traverse more than $N$ hops in the network.
 	\item \emph{Wrap-resistance}. Given an output packet of an uncompromised node, an adversary cannot forget the corresponding input packet.
 	\item \emph{Security}. An adversary cannot distinguish among packets that enter an uncontrolled node $n$ even if the adversary is able to 1) select paths for the packets forwarded by $n$, 2) control all nodes on the path except $n$, and 3) observe all input and output packets of $n$ except the challenge packets.
 \end{itemize}
 
 \subsubsection{Correctness}
 A careful scrutiny of Section~\ref{sec:specification} should suffice to demonstrate the correctness of \name's data transmission protocol.
 
 \subsubsection{Integrity}
 We show that with significantly less than $2^k$ work, an adversary cannot forge a message ($IV_0$, $FS_0$, $\gamma_0$, $\beta_0$, $O_0$) that traverses more than $r$ hops nodes $n_0$, $n_1$, $\cdots$, $n_r$ in the network, even if the adversary learns all the secret keys $SV_0$, $SV_1$, $\cdots$, $SV_r$ for the nodes on the path. We construct a proof of contradiction.
 
 For convenience, we introduce a series of notations:
 \begin{align}
 \omega(x, y) &= \PRP^{-1}(h_{\PRP}(x); y)_{[0 .. k-1]}\\
 \rho(iv, x, y) &= \PRG(h_{\PRG}(\omega(x, y)\oplus iv))\\
 \sigma(iv, x, y) &= \PRP(h_{\PRP}(\omega(x, y)); iv)\\
 \tau(iv, x, y, o) &= \DEC(h_{\DEC}(\omega(x, y)); iv; o)
 \end{align}
 
 Assume that the adversary can create a message ($IV_0$, $FS_0$, $\gamma_0$, $\beta_0$, $O_0$) that traverses $n_0$, $n_1$, $\cdots$, $n_r$. We can rewrite 
 the message received by $n_r$, ($IV_r$, $FS_r$, $\gamma_r$, $\beta_r$, $O_r$),
 as follows:
 \begin{align}
 FS_r &= \bigoplus_{i=0} ^{r-1} \rho(IV_i, SV_i, FS_i)_{[c(r-i-1)+b .. 
 	c(r-i-1) + b + |FS| - 1]} \label{eq:proof:integrity:fs}\\
 \gamma_r & =  \bigoplus_{i=0} ^{r-1} \rho(IV_i, SV_i, FS_i)_{[c(r-i-1)+b + |FS|  .. 	c(r-i-1) + b + |FS| + k - 1]} \label{eq:proof:integrity:gamma}\\
 \beta_r &= \bigoplus_{i=1} ^{r-1} \rho(IV_i, SV_i, FS_i)_{[c(r-i) .. c r-1 ]}
 \label{eq:proof:integrity:beta}
 \end{align}
 where $IV_{i+1} = \sigma(SV_i, FS_i, IV_{i})$ and $O_i = \tau(IV_i, SV_i, FS_i, O_{i-1})$ $\forall\; 0 <i < r$.
 
 In order for the MAC to check on node $n_r$ so that $n_r$ will forward the packet, we need:
 \begin{equation}
 \gamma_r = \MAC(h_{\MAC}(\omega(SV_r, FS_r)); FS_r||\beta_r||O_r)
 \label{eq:proof:integrity:target}
 \end{equation}
 If we substitute Equation~\ref{eq:proof:integrity:fs}, \ref{eq:proof:integrity:gamma}, and \ref{eq:proof:integrity:beta} into Equation~\ref{eq:proof:integrity:target}, the right side of Equation~\ref{eq:proof:integrity:target} becomes a function with input ($IV_0$, $O_0$, $SV_0$, $\cdots$, $SV_r$, $FS_0$, $\cdots$, $FS_{r-1}$):
 \begin{align}
 \gamma_r &= \MAC(h_{\MAC}(\omega(SV_r, FS_r)); FS_r||\beta_r||O_r)\nonumber \\
 &= \mu(IV_0, O_0, SV_0, \cdots, SV_r, FS_0, \cdots, FS_{r}) \label{eq:proof:integrity:item4}
 \end{align}
 
 Before continuing, we first prove the following lemma:
 
 \begin{lemma}
 	With significantly less than $2^k$ work, an adversary can only distinguish $\mu$($IV_0$, $O_0$, $SV_0$, $\cdots$, $SV_r$, $FS_0$, $\cdots$, $FS_{r-1}$) from a random oracle with negligible probability.
 \end{lemma}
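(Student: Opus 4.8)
The plan is a sequence-of-games argument in the style of the \sphinx and \hornet security proofs~\cite{Danezis2009,Chen2015Hornet}: idealize every keyed primitive that appears inside $\mu$, then show that the fully idealized $\mu$ reproduces the input/output behaviour of a random oracle except on an event of negligible probability. Concretely, I would first model the key-derivation functions $h_{\PRG}$, $h_{\PRP}$, $h_{\DEC}$ and $h_{\MAC}$ as independent random oracles, and then replace $\PRG$ by a random function of the appropriate output length, $\PRP$ (hence both $\omega$ and $\sigma$) by a random permutation, $\DEC$ (hence $\tau$) by a random function, and $\MAC$ by a variable-input-length random function. Each replacement is a standard game hop whose cost is bounded by the corresponding PRG / strong-PRP / PRF distinguishing advantage; since every key and output involved has at least $k$ bits, each such advantage is negligible for an adversary performing significantly less than $2^{k}$ work, and there are only $O(r)$ hops, so the total remains negligible.

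In the idealized world, substituting Equations~\ref{eq:proof:integrity:fs}--\ref{eq:proof:integrity:beta} and the $\tau$-recursion for $O_r$ into Equation~\ref{eq:proof:integrity:target}, the quantity $\mu(IV_0, O_0, SV_0,\dots,SV_r, FS_0,\dots,FS_{r-1})$ becomes one evaluation of the idealized $\MAC$ under key $h_{\MAC}(\omega(SV_r,FS_r))$ on message $FS_r \concat \beta_r \concat O_r$, where $FS_r$ and $\beta_r$ are fixed XOR combinations of the blocks $\rho(IV_i, SV_i, FS_i)$ for $i < r$, and each such block is the image under an independent random function of the point $\omega(SV_i, FS_i)\oplus IV_i$. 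Thus $\mu$ is a small deterministic circuit over a handful of idealized objects whose output is the $k$-bit string returned by this single final $\MAC$ call; the remaining task is to argue that this string looks freshly and independently random across distinct input tuples.

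The crux --- and the step I expect to be the main obstacle --- is a collision analysis: over the at most $q \ll 2^{k}$ distinct input tuples the adversary evaluates, I would bound the probability of any internal collision that a genuine random oracle would not exhibit, namely (i) a repeat among the $r$ points $\omega(SV_i, FS_i)\oplus IV_i$ within one evaluation, so that the $\rho$-blocks fail to be independent; (ii) an unexpected collision among the points feeding $h_{\PRG}$ or $h_{\MAC}$ across evaluations; and, most delicately, (iii) two distinct input tuples yielding the same pair $\big(h_{\MAC}(\omega(SV_r,FS_r)),\, FS_r\concat\beta_r\concat O_r\big)$. Ruling out (iii) is where the real work lies: since $FS_r$ and $\beta_r$ are XORs of freshly random blocks and $\omega$ truncates a full-width permutation image to $k$ bits, one must case-analyse away accidental XOR cancellations and truncation collisions, using that $SV_r$ and $FS_0$ enter the computation directly and that the XOR defining $FS_r$ is non-degenerate; this yields a birthday-type bound of order $q^{2}/2^{k}$ rather than a cleaner $q/2^{k}$. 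Conditioned on the complement of these bad events, every fresh input tuple triggers a fresh, uniform, independent $k$-bit output --- exactly the behaviour of a random oracle --- so the overall distinguishing advantage is at most the sum of the game-hop terms plus $O(q^{2}/2^{k})$, which is negligible whenever the adversary does significantly less than $2^{k}$ work (and, for the concrete choice $k=128$, far below any realistic budget).
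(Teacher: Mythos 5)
Your plan mirrors the paper's high-level approach: treat $h_{\PRG}$, $h_{\PRP}$, $h_{\MAC}$, $\PRG$, $\PRP$, $\DEC$, $\MAC$ as ideal objects, observe that $\mu$ is one terminal $\MAC$ call under key $h_{\MAC}(\omega(SV_r,FS_r))$ on message $FS_r\concat\beta_r\concat O_r$, and reduce random-oracle indistinguishability to showing that distinct input tuples cannot yield the same $(\text{key},\text{message})$ pair. That matches the paper's reformulation (``two distinct tuples cannot collide under $\mu$'') and your event (iii) is exactly the contradiction hypothesis the paper attacks.

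The gap is in how you propose to rule out (iii). You treat $FS_r$ and $\beta_r$ as generic XORs of ``freshly random blocks'' and say one should ``case-analyse away accidental XOR cancellations.'' This misses the structural fact the paper leans on and without which the cancellation analysis does not close. By the definition of the header (Equation~\ref{eq:proof:integrity:beta}), the $\rho$-blocks contribute to \emph{nested} ranges of $\beta_r$: the trailing $c$ bits of $\beta_r$ depend on the single block $\rho(IV_{r-1},SV_{r-1},FS_{r-1})_{[c(r-1)..cr-1]}$, not on an XOR of several. So equality of the trailing $c$ bits forces (since $\PRG$ is a random oracle) equality of the single seed $\omega(SV_{r-1},FS_{r-1})\oplus IV_{r-1}$; and since $IV_{r-1}=\sigma(SV_{r-2},FS_{r-2},IV_{r-2})$ with $\omega$ and $\sigma$ independent random oracles on disjoint domains, this in turn forces $SV_{r-1}=SV_{r-1}'$, $FS_{r-1}=FS_{r-1}'$ and equality of the arguments of $\sigma$, which recurses all the way down to $i=0$. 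Combined with $O_r=O_r'$ and $\tau$ being invertible, this pins $O_0=O_0'$. That ``peel from the tail'' over the triangular padding plus the $IV$-chain is the mechanism that rules out (iii); your sketch (``non-degenerate XOR'', ``$SV_r$ and $FS_0$ enter directly'') does not isolate a single random block, so as written you would face the case where two \emph{sums} of dependent random blocks coincide, which is exactly the degenerate situation the paper's peel-back is designed to avoid. A second, smaller point: the blocks $\rho(IV_i,SV_i,FS_i)$ are not all independent a priori, because the $IV_i$ are chained; your bad-event (i)/(ii) bookkeeping would need to be set up so that this chaining is exposed rather than assumed away, which is again what the $\sigma$-substitution in the paper accomplishes.
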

 
 \begin{proof}
 	We prove a statement equivalent to the lemma: with significantly less than $2^k$ work, an adversary cannot find two sets
 	\begin{align}
 	&(IV_0, O_0, SV_0, \cdots, SV_r, FS_0, \cdots, FS_{r-1}) \neq \nonumber \\
 	&(IV_0', O_0', SV_0', \cdots', SV_r', FS_0', \cdots, FS_{r-1}')
 	\end{align}
 	such that they lead to the same value of $\mu$. We will prove this by proof of contradiction. 
 	
 	Assume the adversary found two distinguished values that yields the same value of $\mu$. Because MAC is a random oracle, with significantly less than $2^k$ work, the attacker has to guarantee:
 	\begin{align}
 	\omega(SV_r, FS_r) &= \omega(SV_r', FS_r') \label{eq:proof:integrity:item1} \\ (FS_r||\beta_r||O_r)&=(FS_r'||\beta_r'||O_r')
 	\label{eq:proof:integrity:item2}
 	\end{align}
 	
 	Given the definition of $\omega$ and Equation~\ref{eq:proof:integrity:item1}, because \PRP is a pseudo-random permutation and $h_{\PRP}$ is collision resistant, the adversary must have $SV_r = SV_r'$.
 	
 	In addition, Equation~\ref{eq:proof:integrity:item2} determines $FS_r = FS_r'$, $\beta_r = \beta_r'$. We will show that the latter means 
 	\begin{align}
 	&(IV_0, SV_0, \cdots, SV_{r-1}, FS_0, \cdots, FS_{r-1}) = \nonumber\\
 	&(IV_0', SV_0', \cdots, SV_{r-1}', FS_0', \cdots, FS_{r-1})'\label{eq:proof:integrity:item3}
 	\end{align}
 	
 	Consider the last $c$ bits of $\beta_r$ and $\beta_{r}'$. By Equation~\ref{eq:proof:integrity:beta}, we have 
 	\begin{align}
 	&\rho(IV_{r-1}, SV_{r-1}, FS_{r-1})_{[c(r-1)..cr-1]} = \nonumber\\
 	&\rho(IV_{r-1}', SV_{r-1}', FS_{r-1}')_{[c(r-1)..cr-1]}
 	\end{align}
 	Because \PRG\xspace is a secure pseudo-random generator, the following equation holds
 	\begin{equation}
 	\omega(SV_{r-1}, FS_{r-1}) \oplus IV_{r-1} = \omega(SV_{r-1}', FS_{r-1}')  \oplus IV_{r-1}'
 	\end{equation}
 	Or
 	\begin{align}
 	&\omega(SV_{r-1}, FS_{r-1}) \oplus \sigma(SV_{r-2}, FS_{r-2}, IV_{r-2})
 	= \nonumber\\
 	&\omega(SV_{r-1}', FS_{r-1}') \oplus \sigma(SV_{r-2}', FS_{r-2}', IV_{r-2}')
 	\end{align}
 	Since  $\omega$ and $\sigma$ are two independent random oracles and their inputs do not overlap, the attacker has to ensure $SV_{r-1} = SV_{r-1}'$,
 	$FS_{r-1} = FS_{r-1}'$, $SV_{r-2} = SV_{r-2}'$, $FS_{r-2} = FS_{r-2}'$, 
 	and $IV_{r-2} = IV_{r-2}'$. 
 	
 	The equation that $IV_{r-2} = IV_{r-2}$ implies $FS_{r-3} = FS_{r-3}'$,
 	$SV_{r-3} = SV_{r-3}'$, and $IV_{r-3} = IV_{r-3}'$, because $\sigma$
 	is a random oracle. Repeating this logic, we will get Equation~\ref{eq:proof:integrity:item3}.
 	
 	Finally, given $O_r = O_{r}'$ and Equation~\ref{eq:proof:integrity:item3}, the attacker, with significantly less than $2^k$ work, has to make sure that
 	$O_0 = O_0'$.
 \end{proof}

 Let
 \begin{equation} 
 f(IV_i, SV_i, FS_i) = \rho(IV_i, SV_i, FS_i)_{[c(r-i)..cr-1]}\label{eq:proof:integrity:simple_rho}
 \end{equation}
 We now can substitute Equation~\ref{eq:proof:integrity:gamma} and \ref{eq:proof:integrity:simple_rho} 
 into Equation~\ref{eq:proof:integrity:item4} and rewrite the latter as:
 \begin{align}
 f(IV_0, SV_0, FS_0) = &\mu(IV_0, O_0, SV_0, \cdots, SV_r, FS_0, \cdots, FS_r) + \nonumber\\
 &\bigoplus_{i=1}^{r-1} f(IV_i, SV_i, FS_i)
 \end{align}
 Because MAC is not used in $f$, the right hand side of the above equation is also a random oracle, which we can denote as 
 \begin{equation}
 g(IV_0, O_0, SV_0, \cdots, SV_r, FS_0, \cdots, FS_r)
 \end{equation}
 
 To sum up, in order for the MAC $\gamma_r$ to check on node $n_r$, the attacker needs to find the solution to 
 \begin{equation}
 f(IV_0, SV_0, FS_0) = g(IV_0, O_0, SV_0, \cdots, SV_r, FS_0, \cdots, FS_r)
 \end{equation}
 with two independent random oracles $f$ and $g$. With significantly less than $2^k$ effort, the adversary can only succeed with negligible probability, which contradicts the assumption.

 \subsubsection{Wrap resistance}
 We prove that given a packet $(IV, FS, \gamma, \beta, O)$, an adversary cannot output a message $(IV', FS', \gamma', \beta', O')$ with significantly less than $2^k$ work, so that processing of the former packet on an uncompromised node leads to the latter one.
 
 If the adversary can succeed with significantly less than $2^k$ work, it is necessary that 
 \begin{equation}
 \rho(IV', SV, FS')_{[c(r-1)..cr-1]} = \beta_{[c(r-1)..cr-1]}
 \end{equation}
 
 Because $\PRP^{-1}$, $\PRG$, $h_{\PRP}$, and $h_{\PRG}$ are all random oracles and $SV$ is unknown to the adversary, with significantly less than $2^k$ work, the adversary can only succeed to generate correct values of ($IV'$, $FS'$) with negligible probability.
 
 \subsubsection{Security}
 To prove the security property, we construct the following game $G_0$. Given an uncompromised node $N$, the adversary selects two paths $n_0, n_1, \cdots, n_{\nu-1}$ ($0<\nu<=r$) and $n_0', n_1', \cdots, n_{\nu'-1}'$ ($0<\nu'<=r$) where $n_i = n_i'$ $\forall 0 \le i \le j$ and $n_j = n_j' = N$. The nodes following $N$ are not necessarily the same sets of the nodes and the length of two paths can be different. The adversary is also able to choose all secrets for all nodes except for N, including the public/private keys and local secrets. Moreover, the adversary can also arbitrarily decide the contents of payload $O$.
 
 The challenger randomly selects a bit $b$ and proceeds in one of the two following ways:
 
 $b=0$. The challenger establishes a flowlet through the path $n_0, \cdots,
 n_{\nu-1}$ and then creates a data packet with payload $O$ chosen by the
 adversary. The challenger outputs $(IV_0, FS_0, \gamma_0, \beta_0, O_0)$, which
 can be sent to node $n_0$. We use $(IV_i, FS_i, \gamma_i, \beta_i, O_i)$ to
 represent the corresponding packet received by node $n_i$ ($n_i'$) on the path. 
 
 $b=1$. The challenger establishes a flowlet through the alternate path $n_0',
 \cdots, n_{\nu'-1}'$ and outputs a data packet ($IV_0$, $FS_0$, $\gamma_0$,
 $\beta_0$, $O_0$) that can be sent to $n_1$.
 
 Given the output $(IV_0, FS_0, \gamma_0, \beta_0, O_0)$, the adversary is challenged to determine $b$. The adversary can additionally input up to $q$ messages so long as they are not equal to $(IV_j, FS_j, \gamma_j, \beta_j, O_j)$.
 
 The adversary's advantage is defined as 
 $$adv = |Pr(success) - \frac{1}{2}|$$ 
 We will show that 
 the advantage is negligible with less than $2^k$ work.
 
 \begin{proof}
 	We use a hybrid-game method by establishing the following two new games $G_1$ and $G_2$. The definition of $G_1$ is the same as $G_0$ except that we require that $N=n_0=n_0'$, i.e., the first node is uncompromised. We further define $G_2$, whose assumption is the same as $G_1$ with only one exception that $(IV_0, FS_0, \gamma_0, \beta_0, O_e)$ are randomly drawn from the corresponding domains.
 	
 	First, because in $G_2$, the message are all randomly drawn, the adversary's advantage in guessing the bit is 0. Next, we would show that in a chain of game $G_0 \rightarrow G_1 \rightarrow G_2$, the adversary can only distinguish a game from the previous game with negligible success probability with significantly less than $2^k$ work.
 	
 	$G_0 \rightarrow G_1$. On one hand, it is obvious that an adversary who can succeed in $G_0$ is able to succeed in $G_1$ as the former one is a more general game. On the other hand, because the adversary fully control nodes $n_0, \cdots, n_{j-1}$ and can thus emulate their packet processing, the adversary can win game $G_0$ if s/he can win $G_1$.
 	
 	$G_1 \rightarrow G_2$. In order to distinguish $G_2$ from $G_1$, the following statements must be true:
 	\begin{itemize}
 		\item The adversary can distinguish $FS_0$ $=$ $\PRP$($h_{\PRP}(SV_0)$; $s_0 || R_0$) from randomness without knowing $SV_0$. Because $\PRP$ is pseudo-random permutation with security parameter $k$, the probability that the adversary succeeds with less than $2^k$ work is negligible.
 		\item The adversary can distinguish 
 		\begin{align}
 		\beta_0 = \PRG(h_{\PRG}(SV_0) \oplus & IV_0)_{[c(r-1) .. cr - 1]} \oplus \nonumber \\
 		&{ctrl_1 || FS_1 || \gamma_1 || \beta_1)}
 		\end{align}
 		from random bits. Because $\PRG$ is secure pseudo-random number generator, and the adversary has no knowledge of $SV_0$, the probability that the adversary succeeds is negligible.
 		\item We can repeat the same logic to show that it is impossible to distinguish $\gamma_0$ and $O_0$ from random bits with non-negligible probability with significantly less than $2^k$ work.
 	\end{itemize}
 \end{proof}

\section{Evaluation}
\label{sec:evaluation}
This section describes our implementation of \name{}, a performance evaluation,and our evaluation of bandwidth overhead added by end-to-end traffic shaping.

\subsection{Implementation on High-speed Routers}
\label{sec:implementation}
We implement \name's setup and data
transmission logic on a software router. We use Intel's Data Plane
Development Kit (DPDK~\cite{dpdk}, version 2.1.0), which supports fast packet processing in user space. We assemble a customized cryptography library based
on the Intel AESNI sample library~\cite{aesnilibrary}, and the curve25519-donna~\cite{curve25519donna} and PolarSSL~\cite{polarssl} libraries.  We use 128-bit AES counter mode for encryption and 128-bit AES CBC-MAC.

\begin{figure*}[tbp]
	\centering
	\subfigure[Latency]{
		\includegraphics[width=0.3\textwidth]{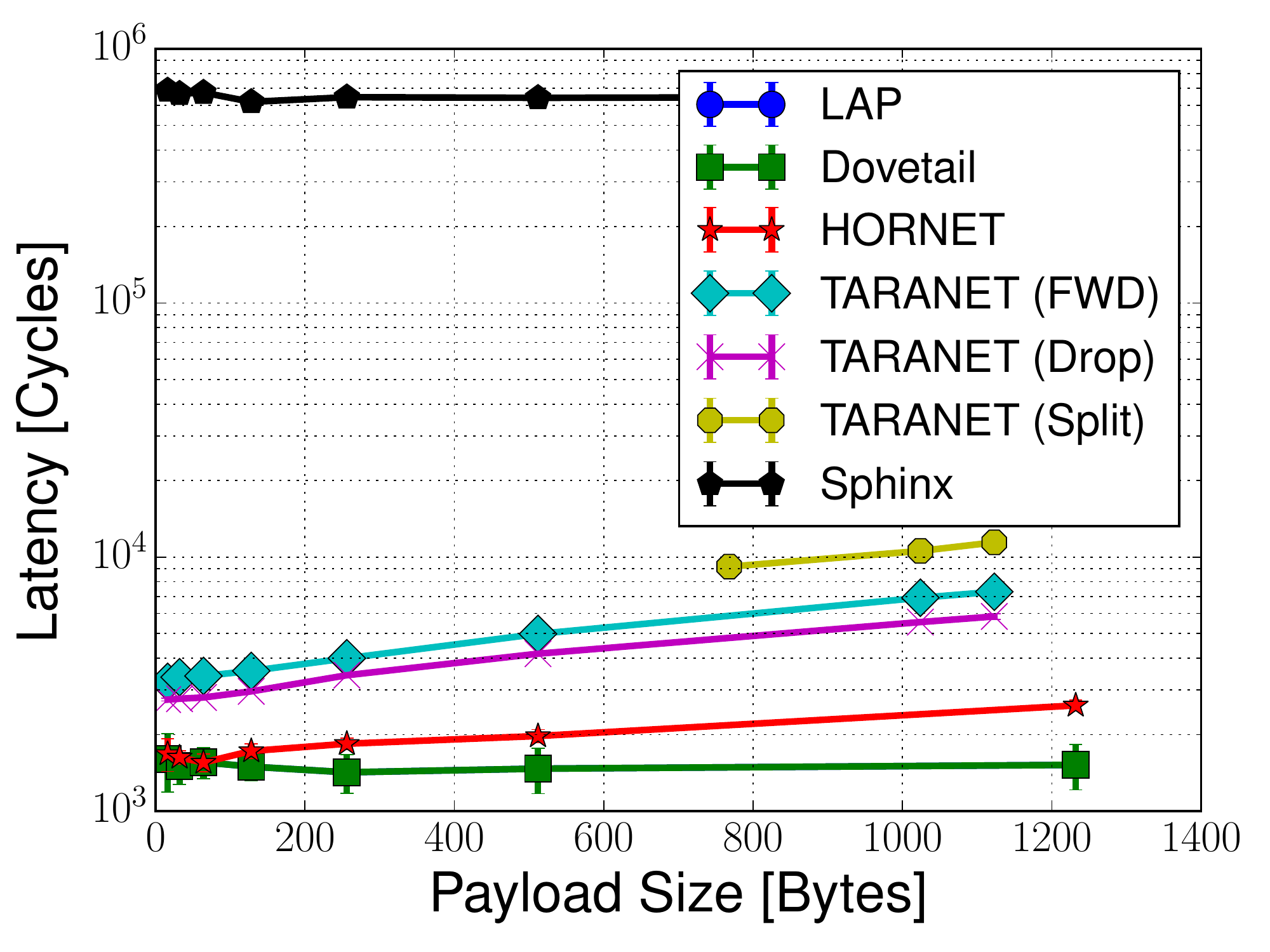}
		\label{fig:latency}
	}
	\subfigure[Goodput: 7-hop header]{
		\includegraphics[width=0.3\textwidth]{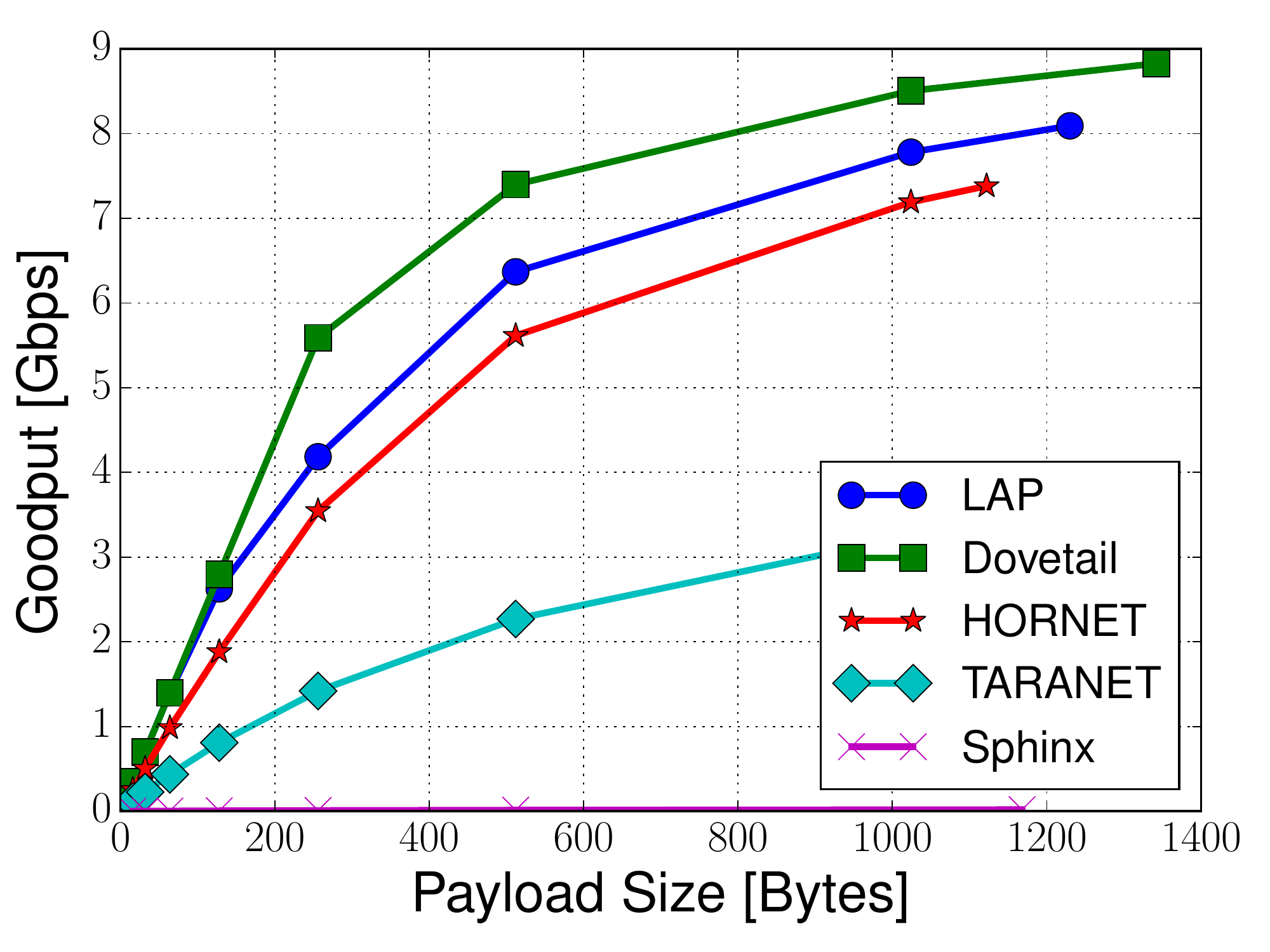}
		\label{fig:goodput_small}
	}
	\subfigure[Goodput 14-hop header]{
		\includegraphics[width=0.3\textwidth]{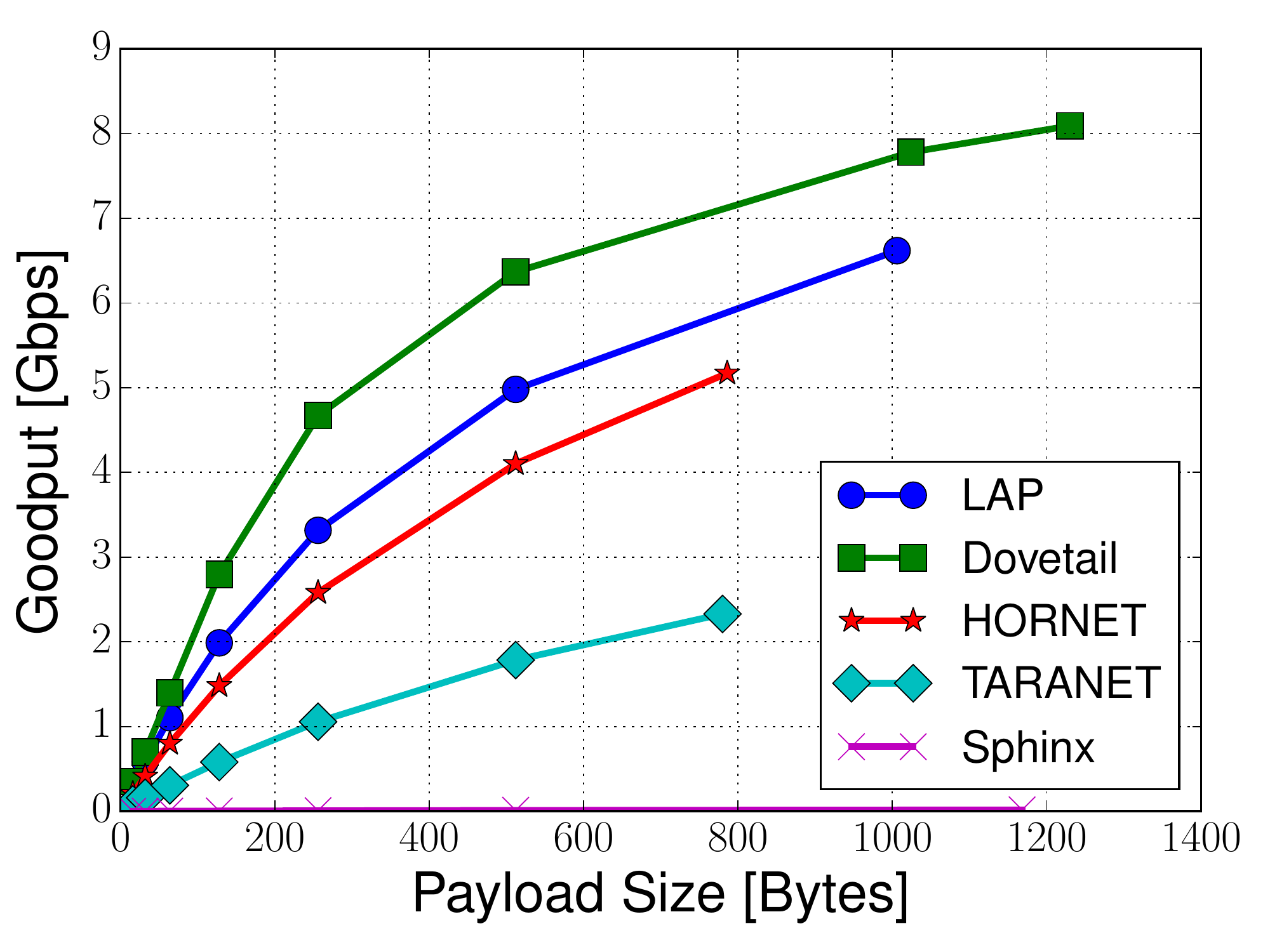}
		\label{fig:goodput_large}
	}
	\caption{a) Average latency of processing a packet for different protocols with error bars (95\% confidence intervals). For a packet with ``SPLIT'' flag, because the payload has to contain at least two other packet headers, we only test packets with payloads at least 768~bytes. Lower is better. b) Data forwarding goodput on a 10~Gbps link for packets with 7-hop headers and different payload sizes; c) data forwarding goodput on a 10~Gbps link for packets with 14-hop headers and different payload sizes. Higher is better.}
\end{figure*}

\subsection{Performance Evaluation}
Our testbed is composed of a commodity Intel server and a Spirent TestCenter
packet generator~\cite{spirent}. The Intel server functions as a software
router and is equipped with an Intel Xeon E5-2560 CPU (2.70~GHz, 2 sockets, 8
cores per socket) and 64~GB DRAM. The server also has 3 Intel 82599ES
network cards with 4 ports per card, and is connected to the packet generator
through twelve 10-Gbps links. Thus the testbed can test throughput up
to 120~Gbps.

To remove implementation bias and allow fair comparison with other
anonymity protocols, we additionally implement \lap~\cite{Hsiao2012},
\dovetail~\cite{Sankey2014}, \hornet~\cite{Chen2015Hornet}, and
\sphinx~\cite{Danezis2009} logic using our custom cryptography library and
DPDK. Note that \lap, \dovetail, and \hornet are high-speed network-layer anonymity
protocols but cannot defend against traffic analysis attacks. \sphinx is a mix network that requires performing public key cryptographic operations for every data packet and incurs high computation overhead. 

\chen{the following paragraph reminds the reviewers that \name should be slower than \lap,
	\dovetail, and \hornet, that \name should be faster than \sphinx, and that \name should provide high throughput. We do not want the reviewers to have false hopes.}
We remind the reader that \name's performance is lower than \lap, \dovetail, and \hornet, because 
\name's traffic-analysis resistance property incurs additional overhead by design. 
However, \name should outperform \sphinx, which also offers traffic-analysis resistance. Additionally,
as an essential requirement of high-speed network-layer anonymity protocol, we expect that \name should sustain high forwarding throughput.

\paragraph{Processing latency} We first evaluate the average latency of processing a data packet on a single core
using different anonymity protocols. For \name, we also compare the latency of
performing different mutation actions. The results are shown in Figure~\ref{fig:latency}.

\name's processing latency is comparable to \lap, \dovetail, and \hornet.
When the payload size is smaller than 64~bytes, processing a \name data packet (following the steps described above) incurs less than 1$\mu$s ($\approx$3700 cycles) per-hop overhead on a single core.
For payloads larger than 1024~bytes, the latency increases to up to 2$\mu$s
($\approx$7200 cycles). Splitting a \name packet incurs only additional 1$\mu$s ($\approx$4200 cycles). Since the total number of ASes on
a path is usually less than 7~\cite{Chen2015Hornet}, \name processing will
add only $\sim$20$\mu$s to the end-to-end latency.

Processing a setup packet on our test machine incurs around 250$\mu$s
(0.66M cycles) per hop per packet.  This is due to setup packets
requiring a DH key-exchange operation. However, for path lengths of less than 7
hops, this latency adds less than 2ms at the start
of each flowlet.

\paragraph{Goodput}
Goodput measures the throughput of useful data that can be transmitted by the protocol as it separates data throughput and packet header overhead. Figures~\ref{fig:goodput_small} and \ref{fig:goodput_large} show the goodput of
different protocols with 7-hop and 14-hop headers, respectively, on a single 10~Gbps link with 1 core assigned. We observe that even with longer processing latency and larger headers,
\name still achieves $\approx$45\% of \hornet's goodput in both cases. With a single core, \name can still achieve $\sim$0.37~Mpkt/s.

\begin{figure*}[tbp]
	\centering
	\subfigure[Split Rate: Different Success Rate]{
		\includegraphics[width=0.3\textwidth]{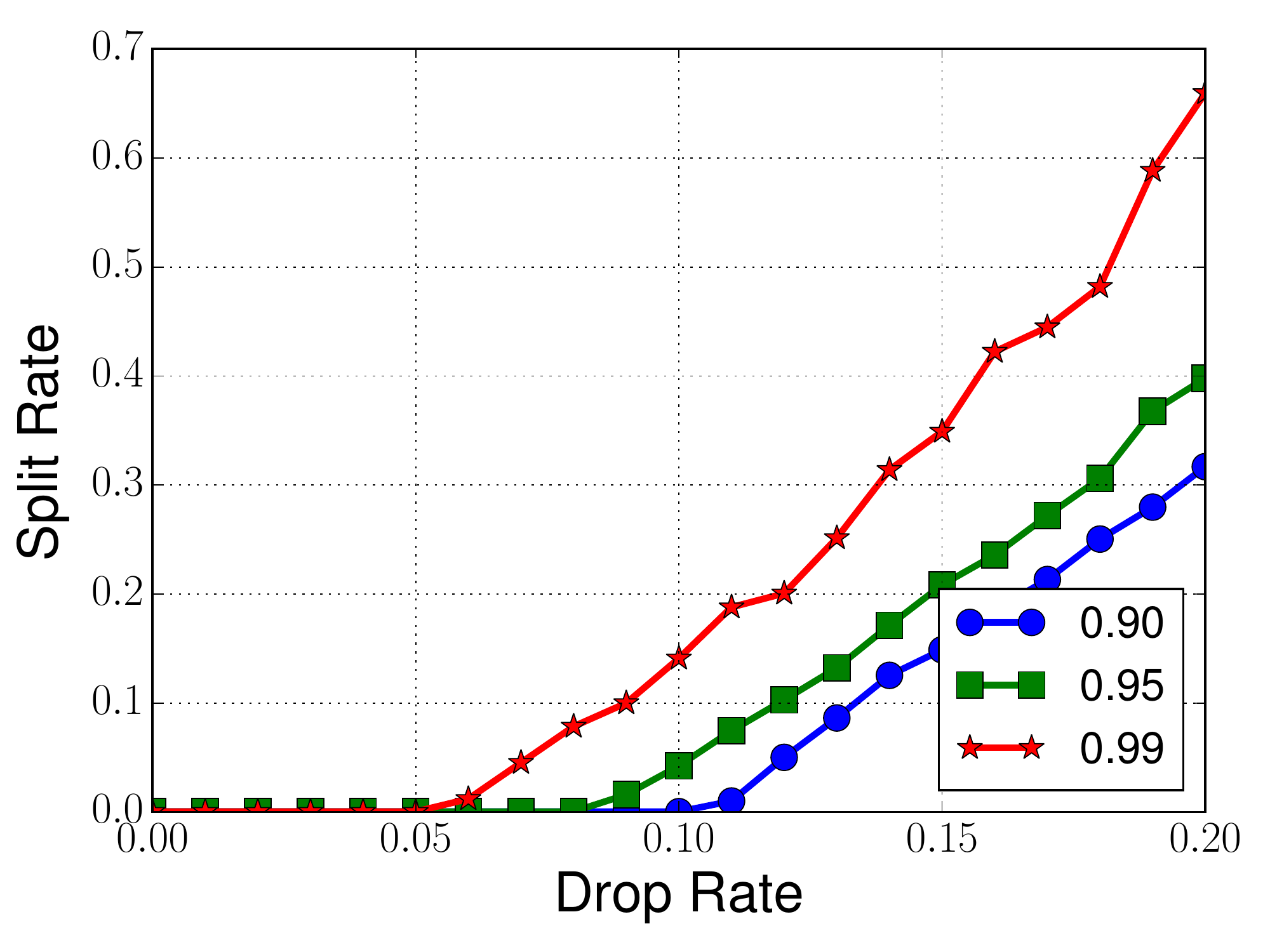}
		\label{fig:split_rate_succ_rate}
	}
	\subfigure[Split Rate: Different $H$]{
		\includegraphics[width=0.3\textwidth]{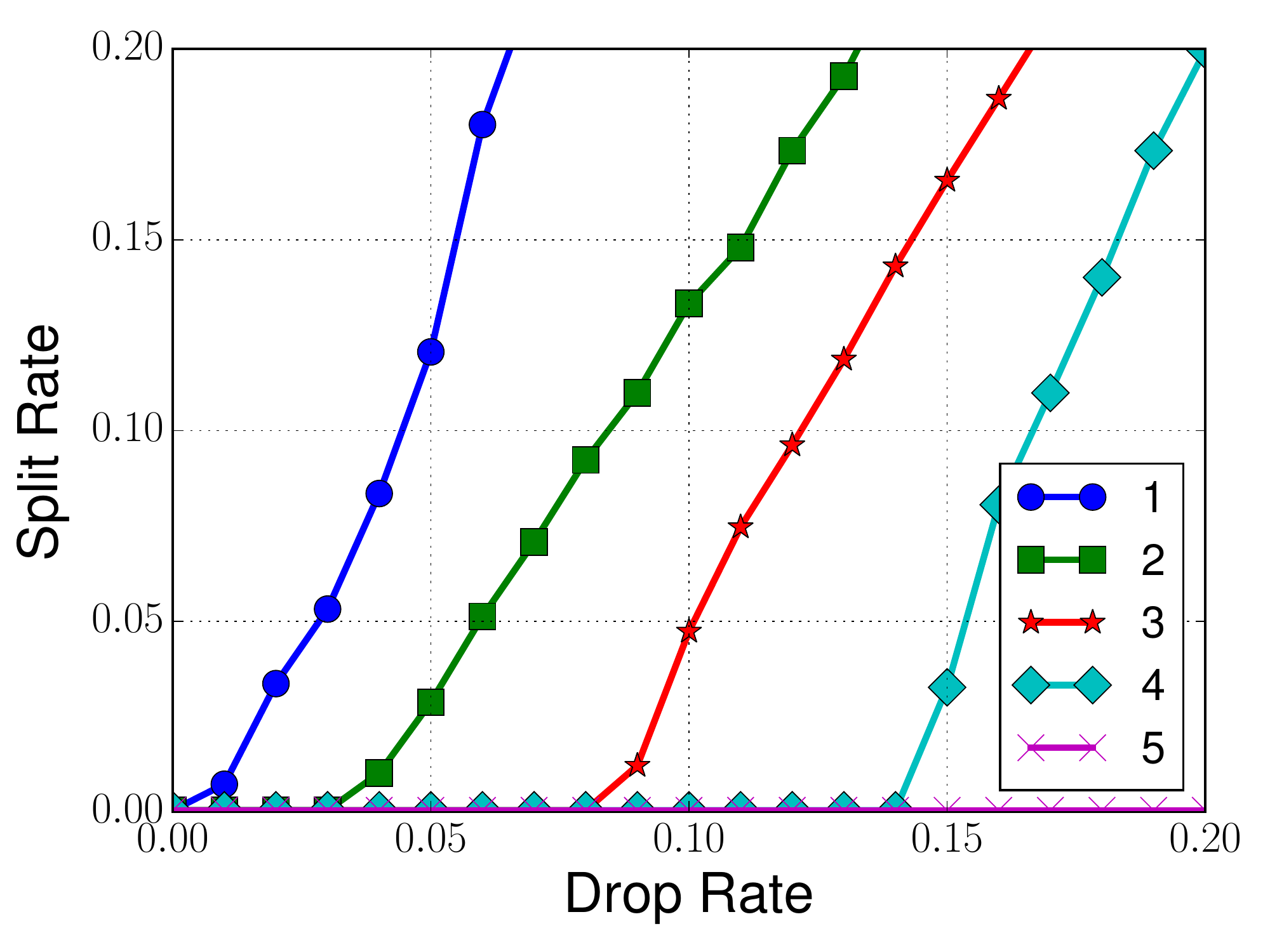}
		\label{fig:split_rate_hole_count}
	}
	\subfigure[Overhead Caused by Chaff Traffic]{
		\includegraphics[width=0.3\textwidth]{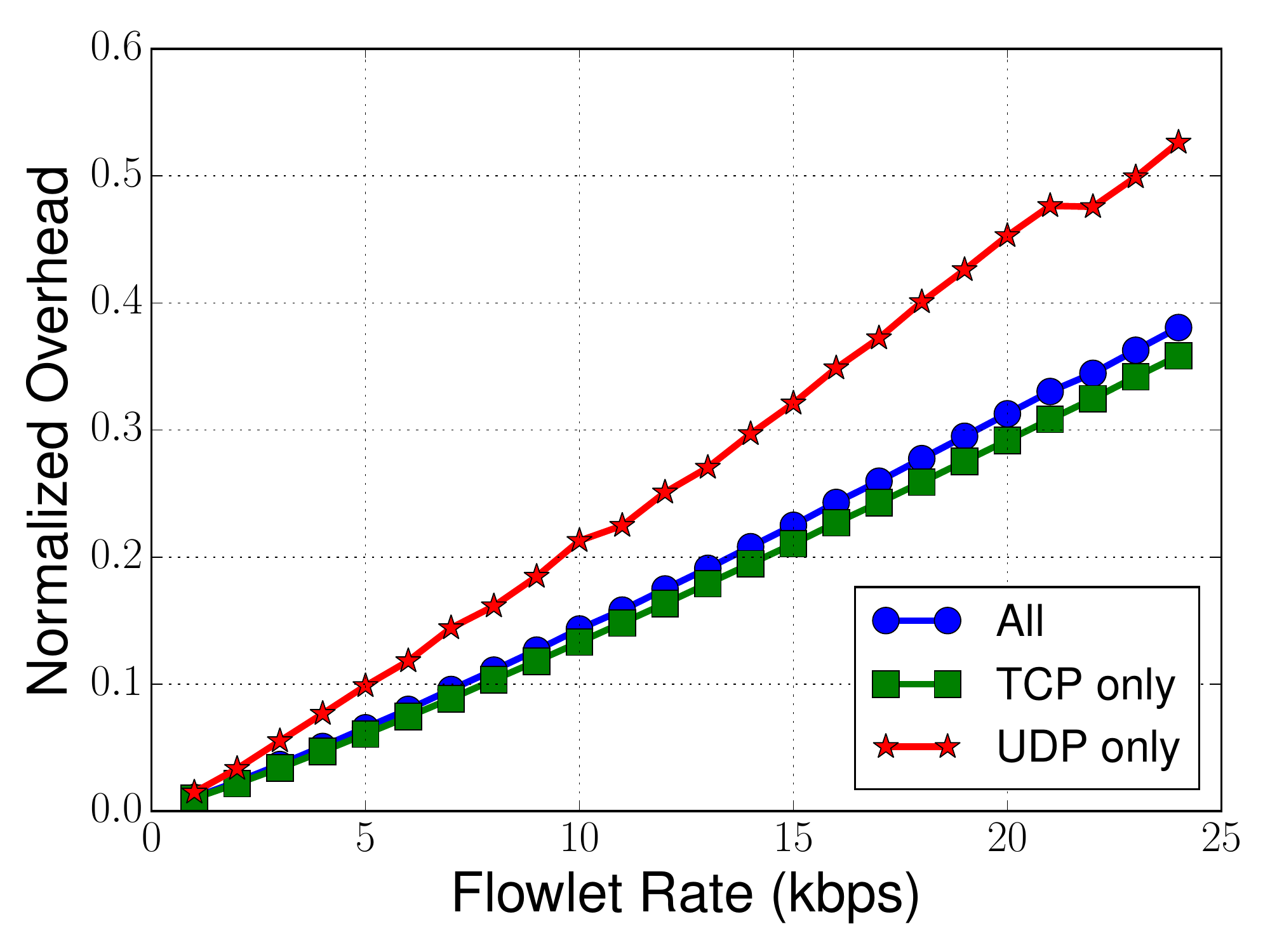}
		\label{fig:overhead_chaff}
	}
	\caption{a) Probability of splittable packets to guarantee different success rate. Failure counter $H=2$. Different lines stand for different success rates. b) Probability of splittable packets for different parameter $H$ (See Section~\ref{sec:flowlet}) to achieve a 95\% success rate. c) Bandwidth overhead caused by added chaff traffic for traffic shaping. We normalize the overhead by dividing the added overhead by the original bandwidth. Lower is better. }
\end{figure*}

\paragraph{Maximum total throughput}
To evaluate the maximum total throughput of our protocol with respect to the number of cores, we test \name with all twelve 10~Gbps ports enabled while using all 16 CPU cores. Each port has 1 input queue and 1 output queue.
To fully distribute the computation power of 16 cores to 12 input and 12 output queues, we assign 8 cores exclusively to 8 input queues, 4 cores each to one input and one output queue, and the remaining 4 cores to 8 output queues.
The packet generator generates packets that have random egress ports and saturate all 12 ports.
Our evaluation finds that \name can process anonymous traffic at 50.96~Gbps on our software router for packets with 512~bytes of payload, which is comparable to the switching capacity of a commercial edge router~\cite{asr1000}.

\paragraph{Delay of flowlet setup}\label{sec:setupevaluation}
For the setup phase, \name uses packet batching and randomization to protect
against traffic analysis. Our observation is that if the number of flowlet
setups is sufficient large, batching setup packets can still end up yielding a
short setup delay.

We conduct a trace-driven simulation using the CAIDA's anonymized packet traces to evaluate the setup phase's delay. The packet traces are recorded by the ``equinix-chicago'' monitor on a Tier-1 ISP's 10~Gbps link between 1-2 pm on Mar. 20th, 2014~\cite{caida-passive2014}
 We assume the first packet in each flow in the dataset is a flowlet setup
 packet. We simulate the latency introduced by batching, randomizing, and
 cryptographic processing by injecting the setup packet trace into a \name node
 and varying the batch size.

The resulting latency of flowlet setups increases almost linearly as the batch
sizes increases. When the batch size is 16, the per-hop latency is 1.6 $\pm$
1.0~ms (95\% confidence interval). When the batch sizes reaches 128, the per-hop
latency increases up to 12 $\pm$ 7~ms. For a path with 7 AS-hops that means the
flowlet setup will introduce less than $\sim$170~ms additional round-trip
latency for the setup phase, which is a small proportion of the delay of an
inter-continental path.

\subsection{Overhead Evaluation}
We conduct a trace-based evaluation of \name to evaluate added bandwidth overhead for end hosts and the amount of state for routers. For bandwidth overhead, we evaluate the number of splittable packets required to accommodate different levels of packet drops and the number of chaff traffic needed to shape real-world traffic.
We use CAIDA's anonymized packet traces as discussed in Section~\ref{sec:setupevaluation}. We filter away ICMP packets and small flows that has the size smaller than 10 packets or has the transmission rate lower than 1 byte/second. 

\paragraph{Split rate}
First, we evaluate the number of splittable packets needed to account for packet drops. Note that an insufficient split rate causes a node to deplete its locally cached chaff and prematurely terminate a flowlet when there exist large number of packet drops. We convert each flow in the trace into flowlets where $B=10$~kbps and $T=1$~min and run the trace with different per-hop split rates, different drop rates, and various failure counters $H$. We set $L_{chf}=3$ to let a node to cache 3 chaff packets at maximum for each flowlet.  Figure~\ref{fig:split_rate_succ_rate} shows the required per-hop split rates regarding different drop rates to achieve different success rates. The observed drop rate in the Internet is around 0.2\%~\cite{Sundaresan2011}. For such a low drop rate and a failure counter $H=2$, a node can set the per-hop split rate to almost 0 and still obtain high success rates as much as 99\%.  
To account for a highly lossy network link or an adversary that manipulates timing pattern through dropping packets,
an end host can adjust the split rate up to 5\% even for a very high per-hop drop rate 10\% to achieve 95\% success rates. 

Figure~\ref{fig:split_rate_hole_count} demonstrates the required per-hop split rates with respect to different drop rates to guarantee a 95\% success rate when the failure counter $H$ ranges from 1 to 5. In general, given a certain packet drop rate, a larger $H$ helps reduce the per-hop split rate. For instance, when $H$=1, the per-hop split rate can be as high as 12\% for a packet drop rate of 5\%. However, when $H$ increases to 2, the required per-hop split rate is already at 3.4\%. When we lets $H$=4, we can accommodate a per-hop drop rate of 15\% by a per-hop split rate as small as 3.7\%.
We remark that when flowlets have smaller bandwidth $B$ (as will be shown next), the success rate increases. 

\paragraph{Chaff overhead}
We then evaluate the bandwidth overhead of the added chaff traffic. We convert real-world flows in the CAIDA's packet traces into flowlets and compute the amount of chaff required. 
Note that we normalize the resulting overhead through dividing it by the total traffic volumes, in order to remove the
impact of traffic volumes and network sizes.

Figure~\ref{fig:overhead_chaff} plots the chaff overhead needed for the conversion when the bandwidth parameter of flowlets $B$ varies and $T$=1~min. Generally, large $B$ results in large chaff overhead. When we use $B$=5~kbps, the overhead of chaff packets is 7\%. In comparison, when $B$ becomes 20~kbps, the overhead of chaff traffic is increased to 31\%. 
Moreover, we observe that small flow sizes, such as UDP flows for DNS lookups, lead to large chaff overhead given the same $B$ because more packets in the resulting flowlets are chaff packets. In Figure~\ref{fig:overhead_chaff}, the chaff overhead for UDP flows is larger than for TCP flows because the size of UDP flows is usually larger than the size of TCP flows. 

\paragraph{Required amount of state and scalability}
To enable traffic shaping for flowlets, an intermediate node maintains state bounded by the node's bandwidth (Section~\ref{sec:flowlet}). To demonstrate the scalability of \name with respect to the Internet traffic volumes, we evaluate the amount of state required for a node to process real Internet traffic. 

For each flowlet that consumes bandwidth $B$, a node maintains $L_{chf}$ chaff packets and a failure counter required by the end-to-end traffic shaping technique (Section~\ref{sec:flowlet}).
We set $L_{chf} = 3$ and vary the bandwidth parameter $B$ for flowlets. Using the flowlets converted from our CAIDA flow trace, we can evaluate the amount of state required. Our results show that a node stores 90~MB state when $B$=10~kbps for a 10~Gbps link and that the node needs to store 52~MB state when $B$=20~kbps.

\section{Discussion}
\label{sec:discussion}

\subsection{Incremental Deployment}
\paragraph{Deployment Incentives}
We envision that ISPs have incentives to deploy \name to offer strong
anonymity as a service to their privacy-sensitive users or other
customer ISPs who in turn desire to offer anonymous communication
services. This would give \name-deploying ISPs a competitive
advantage: both private and business customers who want to use an
anonymity service would choose an ISP that offers privacy protection.

\paragraph{Incremental Deployment Strategy}
The minimal requirement for deploying \name includes a topology server
that distributes path information, a few ISPs that deploy border
routers supporting the \name protocol, and end hosts that run \name
client software. We remark that the network architectures that we
consider, such as NIRA~\cite{Yang2007NIRA},
NEBULA~\cite{Anderson2013NEBULA} and SCION~\cite{Xin2011SCION},
already assume such topology servers as part of necessary
control-plane infrastructure.

Admittedly, the more ISPs that deploy \name would increase the
anonymity set size which in turn benefits all users. However, a few
initial \name-enabled ISPs that share no physical links can establish
tunnels between each other through legacy ISPs and start to carry
anonymous traffic among users. As more \name-capable ISPs join the
\name network, tunnels are increasingly replaced with direct
ISP-to-ISP connections, which provides increasingly better guarantees.

\subsection{Limitations}
\paragraph{Long-term Intersection Attack}
An adversary who observes presence of all sender
and receiver clients over a long period of time can perform intersection
analysis~\cite{danezis2005, mathewson2005} to reveal pairs of clients that are
repeatedly online during the same period. Clients can minimize their risk by
being online not only when they are actively communicating, but in general this
attack is difficult to defend against through technical means. For further
defenses, \name could be enhanced using existing solutions,
such as dummy connections~\cite{berthold2003}, or with the Buddies system~\cite{wolinsky2013}, 
which allows clients to control which subset of pseudonyms appears online for a particular session. We leave
analysis and evaluation of integration with such systems to future work.

\paragraph{Routing Attacks}
\name relies on underlying network architectures for routing packets.
Adversarial nodes can attack underlying network architectures to place
themselves at strategic positions to launch traffic
analysis~\cite{vanbever2014anonymity, sun2015raptor}. Although defeating routing
attacks itself is beyond \name's scope,
the network architecture candidates we consider
offer control and integrity of packets' paths that prevent routing attacks. For
example, SCION~\cite{Xin2011SCION} and NEBULA~\cite{Anderson2013NEBULA} both
embed integrity tags within paths to prevent path modification. Pathlet~\cite{godfrey2009pathlet} pushes path selection to end
hosts, enabling end hosts to select the traversing path.

\paragraph{Denial-of-Service Attacks}
An adversary can initiate a high volume of flowlets passing through a node, to
exhaust the node’s computation power, bandwidth, and memory. \name itself cannot
defend against such a DoS attack. To mitigate the DoS attack, a node can require
flowlet initiators to solve cryptographic puzzles [26]. Additionally, an ISP
that operates \name can also directly restrict the average flowlet-initiation
rate of its customers. We note that a DoS attack aiming to exhaust a \name's
memory state by initiating large number of flowlets will fail, because the
amount of state that a \name node maintains is strictly linear to the node's
actual bandwidth. If the adversary is able to accumulate sufficient bandwidth,
such an attack will only jam the network link of the victim node and become
a bandwidth-based DoS attack that can be mitigated.

\section{Conclusion}
\label{sec:conclusion}
In this paper, we have shown that it is possible to
obtain the efficiency of onion-routing-based anonymity systems and the security
of mix-based systems while avoiding their disadvantages. We have
designed \name, which uses mixing and coordinated traffic shaping
to thwart traffic analysis for the setup phase and the data transmission phase respectively.  To achieve high
performance and scalability, we build on the key observation that high-speed
networks process enough volume that mixing at their core routers has minimal
performance overhead. The performance and security properties achieved by our
protocol suggest that efficient traffic-analysis-resistant protocol at the network
layer is feasible, and that the increased security warrants the additional performance
cost.
 
 \section{Acknowledgments}
 \label{sec:ack}
 We would like to thank the anonymous reviewers for their suggestions for improving the paper. We also appreciate the insightful discussion with the 
members of ETH Zurich Network Security group.

The research leading to these results has received funding from the
European Research Council under the European Union's Seventh Framework
Programme (FP7/2007-2013) / ERC grant agreement 617605. We gratefully
acknowledge support from ETH Zurich and from the Zurich Information
Security and Privacy Center (ZISC).
  
\bibliographystyle{plain}
{\tiny
	\balance \bibliography{bib.bib}
}

\end{document}